\documentclass[prx,twocolumn,aps,superscriptaddress,floatfix,nofootinbib,a4paper]{revtex4-2}
\usepackage{graphicx} 
\usepackage[dvipsnames]{xcolor}
\usepackage{hyperref}
\hypersetup{
    colorlinks = true,
    allcolors = RoyalPurple
}
\usepackage{amsfonts,amsmath,amssymb,amsthm}
\usepackage{marvosym}
\usepackage{dsfont}
\usepackage{braket}
\usepackage{tikz}
\usepackage{complexity}
\usepackage{orcidlink}

\DeclareMathOperator{\Mat}{Mat}

\newtheorem{definition}{Definition}[]
\newtheorem{problem}{Problem}
\newtheorem{lemma}[definition]{Lemma}
\newtheorem{theorem}{Theorem}

\newcommand{\comm}[2]{\left[ #1 , #2 \right]}

\renewcommand{\braket}[1]{\left\langle #1 \right\rangle}
\renewcommand{\ket}[1]{\left\vert #1 \right\rangle}
\renewcommand{\bra}[1]{\left\langle #1 \right\vert}

\newcommand{\norm}[1]{\left\lVert #1 \right\rVert}
\newcommand{\abs}[1]{\left\lvert #1 \right\rvert}

\renewcommand{\epsilon}{\varepsilon}

\definecolor{BLUE}{RGB}{0,0,255}

\newcommand{\QOverview}{
\begin{tikzpicture}
    \filldraw[fill=WildStrawberry!20, align=center] (0,0) ellipse (3.5cm and 3cm) node {\tiny Gaussian Hamiltonians \\[4.5cm]};
    \def\ICellipse{(0,0.025) ellipse (2.5cm and 2cm)};
    \filldraw[fill=RoyalBlue!40, align=center] \ICellipse node {\\[2ex] Inertially Coupled Bosons};
    \def\Hoppingellipse{(0,-1.95) ellipse (1.7cm and 1cm)};
    \filldraw[fill=RoyalBlue!30, align=center] \Hoppingellipse node {\\[0.5cm] \tiny Hopping \\[-1ex] \tiny Hamiltonians};
    \begin{scope}
        \clip \ICellipse;
        \fill[RoyalBlue!20, align=center] \Hoppingellipse;
    \end{scope}
    \node[align=center] at (0,-1.5) {\tiny Quantum \\[-1ex] \tiny Walks};
    \draw[dashed] (1.54,-1.56) arc (-52:-128:2.5cm and 2cm);
    \filldraw[fill=RoyalBlue!30, align=center] (0,1.2) ellipse (1.6cm and 0.8cm) node {\\[0.65cm] \tiny Quantum Oscillators};
    \filldraw[fill=RoyalBlue!20, align=center] (0,1.5) ellipse (1cm and 0.5cm) node {\tiny Classical \\[-1ex] \tiny Oscillators};
    \node at (2,0.1) {\tiny $E=0$};
    \node at (1,1) {\tiny $C=\mathds{1}$};
    \node at (1,-1.5) {\tiny $A=C$};
    \node at (0,-2.75) {\tiny $E^T=-E$};
\end{tikzpicture}
}

\date{March 27, 2026}

\begin{abstract}

The computational complexity of simulating the dynamics of physical quantum
systems is a central question at the interface of quantum physics and computer
science. In this work, we address this question for the simulation of
exponentially large bosonic Hamiltonians with quadratic interactions. We present
two results: First,  we introduce a broad class of quadratic bosonic problems
for which we prove that they are \BQP-complete. Importantly, this class includes
two known \BQP-complete problems as special cases: Classical oscillator
networks and continuous-time quantum walks. Second, we show that extending the
aforementioned class to even more general quadratic Hamiltonians results in a
\PostBQP-hard problem. This reveals a sharp transition in the complexity of
simulating large quantum systems on a quantum computer, as well as in the
difference in complexity between their simulation on classical and quantum computers.
\end{abstract}

\begin{document}
\title{Complexity of Quadratic Bosonic Hamiltonian Simulation: \\ \BQP-Completeness and \PostBQP-Hardness}

\author{Lilith Zschetzsche\,\orcidlink{0009-0002-9531-4039}}
\affiliation{University of Vienna, Faculty of Physics, Boltzmanngasse 5, 1090 Vienna, Austria}

\author{Refik Mansuro\u{g}lu\,\orcidlink{0000-0001-7352-513X}}
\email[]{Refik.Mansuroglu@univie.ac.at}
\affiliation{University of Vienna, Faculty of Physics, Boltzmanngasse 5, 1090 Vienna, Austria}

\author{Norbert Schuch\,\orcidlink{0000-0001-6494-8616}\,}
\affiliation{University of Vienna, Faculty of Physics, Boltzmanngasse 5, 1090 Vienna, Austria}
\affiliation{\mbox{University of Vienna, Faculty of Mathematics, Oskar-Morgenstern-Platz 1, 1090 Vienna, Austria}}

\maketitle

Quadratic bosonic (i.e., Gaussian) Hamiltonians  sit between quantum and
classical systems. On the one hand, they exhibit clear quantum features, such as
sub-classical noise due to squeezing. On the other hand, the resulting
Heisenberg equations are closed on the level of single bosonic operators, and
thus only couple moments of the same order. Therefore, given moments of some
order $R$---which can be encoded in a vector of dimension $d=O(M^R)$---the
dynamics of a system of $M$ bosonic modes can be simulated efficiently (i.e., with
resources $\poly(M)$). In particular, this implies that these systems cannot
capture the full power of quantum computing (unless \BQP=\BPP, or moments of
unbounded order are involved).

That the dynamics of these systems is captured by a linear first-order
differential equation of size $d=O(M^R)$ raises an intriguing question: Can we
gain an advantage by simulating these restricted quantum systems of a fully
fledged quantum computer, given that the problem can be represented using $\log
d=O(R \log M)$ qubits? For the related problem of quadratic \emph{fermionic}
Hamiltonians, this has indeed been demonstrated~\cite{Kraus2011}. 
%
%
As for bosons, Babbush \emph{et al.}~\cite{Babbush_2023} have recently shown how to
simulate exponentially large \emph{classical} systems with quadratic Hamiltonians
on a quantum computer, and that this problem is, in fact, \BQP-complete. As the evolution
equations for the first moments are identical for the classical and quantum
Hamiltonian, this implies \BQP-completeness also for the
corresponding class of bosonic quantum Hamiltonians. A seemingly unrelated
problem which allows for exponential compression and is \BQP-complete are
continuous-time quantum walks, which have long been used as a way
to reason about quantum computing~\cite{Childs_2003,Childs_2009}.  Remarkably,
quantum walks can also be understood as a special instance of bosonic Gaussian
Hamiltonians---namely, a quadratic hopping Hamiltonian acting on the subspace
with one boson.

That for these two simple and distinct classes of Gaussian Hamiltonians,
exponential-size instances are \BQP-complete, raises questions about the
computational complexity of general bosonic Gaussian Hamiltonians: What is the
most general class of bosonic Gaussian Hamiltonians for which exponentially
large instances are \BQP-complete---ideally including both of the aforementioned
problems as special cases? And are there limits to this---are all bosonic
Gaussian Hamiltonians exponentially compressible, or are there classes of
quadratic bosonic Hamiltonians which---despite the fact that they are
classically efficiently simulatable---no longer allow for an exponential
advantage when simulated on a quantum computer, and for which the exponential
gap between classical and quantum simulation therefore breaks down?

In this work, we introduce a class of quadratic bosonic Hamiltonians without
position-momentum coupling, which we term \textit{inertially coupled bosons},
and prove that simulating their dynamics of  exponentially many modes is
\BQP-complete. This class on the one hand contains systems of quantum harmonic
oscillators and thus subsumes the classical
oscillator networks studied in Ref.~\cite{Babbush_2023}. At the same time, we
show that the Hamiltonians governing continuous-time quantum walks can be
realized as a particular form of inertial coupling, evolving a state with a
single bosonic excitation. Inertially coupled bosons thus provide a single
framework which unifies and generalizes classical oscillator dynamics and quantum walk computation within
the same complexity theoretic setting. 
We further extend the class of inertially coupled bosons to all particle
number–conserving quadratic (``hopping'') Hamiltonians\footnote{In contrast to
quantum walks and inertially coupled bosons, these Hamiltonians may include
position–momentum couplings.} while retaining \BQP-completeness. 

Subsequently, we proceed to address the question whether all quadratic
Hamiltonians can be placed inside \BQP, and answer it in the negative: We show
that allowing additional quadratic terms beyond inertial couplings and hopping
interactions (the classes discussed above) leads to a sudden jump in the
computational complexity of the problem. Specifically, the inclusion of such
additional terms makes the problem  \PostBQP-hard (equivalently,
\PP-hard), which, remarkably, is identical to the complexity of classically
simulating the very same system. This establishes a sharp boundary between
quantumly simulable and computationally hard regimes within quadratic bosonic
systems, and at the same time reveals a surprising jump in the complexity gap
between the classical and quantum simulation of quadratic bosonic Hamiltonians.

\begin{figure}[t]
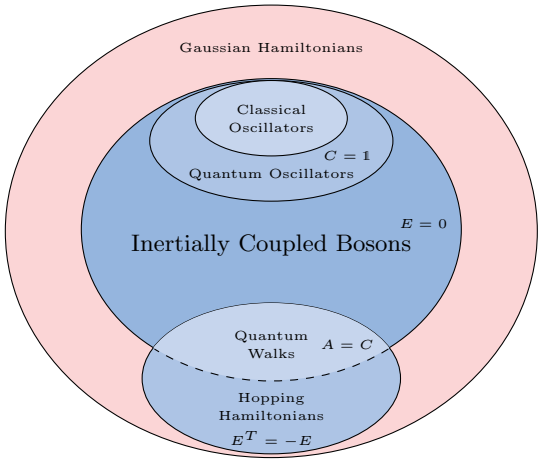

    \centering
    \QOverview
   \caption{\textbf{Venn diagram depicting inclusion of different classes of Hamiltonians in the class of inertially coupled bosons, cf.\ Eq.~\eqref{eq:QO}.} We define the class of inertially coupled bosons as a subset of quadratic bosonic Hamiltonians and prove that their simulation can be formulated as a \BQP-complete decision problem. This unifies previous proofs for \BQP-completeness of classical and quantum oscillators as well as quantum walks and general hopping Hamiltonians extending the class of inertially coupled bosons. Shades of blue indicate \BQP-completeness and Hamiltonians in red define a \PostBQP-hard problem, in general.}
    \label{fig:QO}
\end{figure}

\emph{Inertially Coupled Bosons.---}We consider a system of $M = 2^m$ bosonic modes with position and momentum operators $q_j$ and $p_j$ (with mode label $j$) satisfying the bosonic commutation relations
$\comm{q_j}{p_k} = i \delta_{jk} \mathds{1}$, 
$\comm{q_j}{q_k}=\comm{p_j}{p_k}=0$.
A system of inertially coupled bosons is defined by the quadratic Hamiltonian 
\begin{align}
    H = \frac{1}{2} \sum_{j,k} (C_{jk} p_j p_k + A_{jk} q_j q_k),
    \label{eq:HOHam}
\end{align}
where $A, C \in \Mat(M \times M, \mathds{R})$ are symmetric $d$-sparse matrices,
i.e.\ each row and column contains at most $d$ nonzero entries. For now, we
require $A$ and $C$ to be positive semidefinite (PSD), amounting to a lower
semibounded spectrum of the bosonic Hamiltonian. Further details, including
extensions beyond the PSD case, are provided in Appendix \ref{app:PSD}.

For our construction, we impose a sign convention on the elements of $A$, 
motivated by a potential energy arising from spring-like coupling, i.e., of the form
\begin{align}
    \frac{1}{2} \sum_{j,k} k_{jk} (q_j - q_k)^2 + \frac{1}{2} \sum_j k_{jj} q_j^2
\end{align}
with positive coupling constants $k_{jk} = - A_{jk}$ for interactions between the $j^\text{th}$ and $k^\text{th}$ boson, and positive coefficients 
\begin{align}
    k_{jj} = A_{jj} + \sum_{j' \neq j} A_{jj'}
\end{align}
for the coupling to a reference point,\footnote{Nonzero $k_{jj}$ restrict the motion of the bosons and prevent arbitrary displacements from the reference point. This can model weak boundaries of an effective confining volume.}
and analogously for $C$ as arising from a kinetic energy of the form
\begin{align}
    \frac{1}{2} \sum_{j,k} l_{jk} (p_j - p_k)^2 + \frac{1}{2} \sum_j l_{jj} p_j^2,
\end{align}
with coupling constants $l_{jk} \geq 0$ for all $j,k$. 
This is equivalent to requiring that $A$ and $C$ are Laplacian-type stiffness
matrices, which are positive semidefinite and diagonally dominant.
Note that this specific sign convention (beyond $A$ and $C$ being PSD) is not
required for \BQP-completeness, but simplifies the construction of the
simulation algorithm. 

With this sign convention, the model corresponds to the quantum analog of a classical Hamiltonian derived from couplings via positive spring constants, now acting in both position and momentum space. Special cases include a quantum analog of a spring network when $C = M^{-1}$ is the inverse of the (diagonal) mass matrix $M$, and (possibly multiparticle) quantum walks when $A = C$, which we discuss in detail later. 

Off-diagonal entries of $C$, corresponding to coupling between the momenta of different bosons, can be interpreted as inertial coupling. Such couplings naturally arise in constrained classical systems, for example in the classical double pendulum.

Note that the Hamiltonian in Eq.~\eqref{eq:HOHam} is in general not particle-number conserving. Yet, it remains 2-local (i.e.\ each term has support on at most two modes). We will later show that the \BQP-complete class can be extended to include general particle-number–preserving quadratic Hamiltonians, including those with complex hopping amplitudes. For clarity of presentation, we focus here on the inertially coupled case corresponding to Eq.~\eqref{eq:HOHam}.

Going beyond inertially coupled bosons, the most general quadratic bosonic
Hamiltonian, 
    \begin{align}
        H = \frac{1}{2} \begin{pmatrix} q & p \end{pmatrix} \begin{pmatrix} A & E \\ E^T & C \end{pmatrix} \begin{pmatrix} q \\ p \end{pmatrix}\ ,
        \label{eq:QO}
    \end{align}
includes mixed position–momentum couplings described by a matrix $E_{jk} =
E^+_{jk}+iE^-_{jk}$,  which can be split into a real symmetric component $E^+$
and a real antisymmetric component $iE^-$. These mixed $q$-$p$ terms include both
general hopping interactions and squeezing-type terms. As we show later, these
terms enable computational power beyond \BQP.

\emph{Simulation of inertially coupled bosons.---}The problem that we consider
is to simulate time evolution for a bosonic system of size $M=2^n$ bosons with
sufficiently sparse inertial coupling for a time $t=\poly(n)$. The task
is to estimate sums of squared expectation values $\sum_O \abs{\braket{O(t)}}^2$ of time-evolved operators, with $O$ from a given class of $R$-local (i.e., $R$-body) observables, where at most $\poly(n)$  of the initial values $\braket{q_j(0)}$ and
$\braket{p_j(0)}$, as well as of the higher moments up to $R^\text{th}$ order,
are non-zero. 

We prove that this problem is \BQP-complete. To this end, we follow an encoding scheme that has been used for classical oscillators \cite{Babbush_2023} and more recently for shadow Hamiltonian simulation \cite{somma2025shadowhamiltoniansimulation}. Encoding expectation values of a suitable set of observables $O_a$ as a quantum state via $\ket{\psi} = \sum_a \braket{O_a} \ket{a}$ allows for a simulation in the Heisenberg picture by
integrating the von Neumann equation. Although the resulting effective time
propagator for $\ket\psi$ does not need to be unitary, through the right choice
of the operators $O_a$ defining the encoding one can achieve unitary  
dynamics. We show how to obtain such a Hamiltonian formulation for the class of
inertially coupled bosonic systems by choosing $O_a$ as $R$-local correlation
functions of the canonical operators $q_j$ and $p_j$, whose dynamics are closed under quadratic
Hamiltonians. This allows for an exponential compression of the data for $M=2^m$ bosonic modes into
$O(m)$ qubits.

We denote all canonical operators by $z = (q_1, ..., q_M, p_1, ..., p_M)^T$. We index the canonical operators collectively by $a$, and distinguish position and momentum operators by $j$ and $\bar j$, i.e. $z_j = q_j$ and $z_{\bar j} = p_j$. The time evolution of such a system is determined by the von Neumann equation, which can be written in the Heisenberg picture as
\begin{align}
    \partial_t z = i \comm{H}{z} = \begin{pmatrix} 0 & C \\ -A & 0 \end{pmatrix} z.
    \label{eq:HamiltonEQ}
\end{align}
In the following, we choose a redundant encoding in a Hilbert space of dimension $\mathcal{O}(M^{2R})$ in the spirit of the encoding of classical oscillators in Ref.~\cite{Babbush_2023}. To this end, we introduce the matrices $B,D \in \Mat\left( M \times \binom{M+1}{2}), \mathds{R} \right)$ 
which satisfy $A = B B^\dagger$ and $C = D D^\dagger$. As was done in \cite{Babbush_2023}, $B$ can be chosen as a map between each vertex and the edges incident to it given by the interaction graph:
\begin{align}
    B = &\sum_{j<k} \sqrt{-A_{jk}} (\ket{j} - \ket{k}) \bra{j,k} \nonumber \\
    &+ \sum_j \sqrt{A_{jj} + \sum_{k \neq j} A_{jk}} \ket{j} \bra{j,j},
    \label{eq:B}
\end{align}
and $D$ analogously with $A \leftrightarrow C$. We introduce new coordinates via $z_{\kappa} = i \sum_j B^\dagger_{\kappa j} q_j$ and $z_{\bar \kappa} = \sum_j D^\dagger_{\kappa j} p_j$, which we will distinguish from the original mode indices $z_k, z_{\bar k}$ by greek subscripts. Alternatively, we use $\alpha \in~\left\{1, ..., \binom{M+1}{2}, \bar 1, ..., \overline{\binom{M+1}{2}} \right\}$ to denote both $(j,k)$ and $(\bar j, \bar k)$ with $j \leq k$. This redundant encoding only leads to constant factors in memory, while allowing for a reconstruction of the original degrees of freedom~$z_a$.

Now consider the following decision problem.
\begin{problem}[Simulation of Inertially Coupled Bosons]
\label{prob:QO}
An instance of the problem consists of: 

(i) Oracle access to a $d$-sparse Hamiltonian $H$ of the form
Eq.~\eqref{eq:HOHam} acting on $M$ bosonic modes (where we can query the
non-zero entries of any row of $H$)

(ii) A list of $\mathcal{O}(\polylog(M))$  nonzero initial moments
\begin{align}
Z_{\alpha_1 \ldots \alpha_r} := \langle z_{\alpha_1}(0) \cdots z_{\alpha_r}(0)\rangle,
\quad 1 \le r \le R,
\end{align}
with $R=\mathcal{O}(1)$, each specified to polynomial precision

(iii) A population $\zeta(I_0)$ over an efficiently characterizable subset $I_0$:
\begin{align}
    \zeta(I_0; t) = \sum_{r=1}^R \sum_{(\alpha_1, \ldots, \alpha_r) \in I_0} \abs{\braket{\prod_{i=1}^r z_{\alpha_i}(t)}}^2,
    \label{eq:read_out}
\end{align}

(iv) A final time $t_f=\mathcal{O}(\polylog(M))$ and real numbers $a > b$ satisfying $a - b = \Omega(1/\polylog(M))$.

The operators evolve according to the Heisenberg equations of motion generated by $H$. The task is to decide whether $\zeta(I_0; t_f) > a$ or $\zeta(I_0; t_f) < b$, promised that one of the two cases holds.
\end{problem}
The restriction on the number of nonzero moments ensures that the input admits a polynomial-size classical description, and is equivalent to assuming access to a polynomial-size quantum circuit preparing the relevant initial state.

Our first result is as follows:

\begin{theorem} \label{thm:BQP}
    Problem \ref{prob:QO} is \BQP-complete.
\end{theorem}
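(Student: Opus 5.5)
Membership in \BQP\ and \BQP-hardness are proved separately.

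\emph{Membership.} We follow the construction of Ref.~\cite{Babbush_2023}. In the transformed coordinates $z_\kappa = i\sum_j B^\dagger_{\kappa j} q_j$ and $z_{\bar\kappa} = \sum_j D^\dagger_{\kappa j} p_j$, Eq.~\eqref{eq:HamiltonEQ} becomes a Schr\"odinger-type equation for the first moments. Consider the vector $\vec y = (B^\dagger q, D^\dagger p)$. It evolves under the effective generator
\begin{align}
    \mathcal H_1 = -\begin{pmatrix} 0 & B^\dagger D \\ D^\dagger B & 0 \end{pmatrix},
\end{align}
up to the phase conventions fixed by the factor $i$. The key point is that $\partial_t B^\dagger q = B^\dagger C p = (B^\dagger D)(D^\dagger p)$, and similarly $\partial_t D^\dagger p = -D^\dagger A q = -(D^\dagger B)(B^\dagger q)$. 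After absorbing the $i$, the generator is Hermitian. By the explicit form Eq.~\eqref{eq:B}, it is $O(d^2)$-sparse with efficiently computable entries.

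For moments of order $r\le R$, we use that quadratic $H$ makes $\partial_t\langle z_{\alpha_1}\cdots z_{\alpha_r}\rangle$ act as a derivation. The generator on the $r$-fold tensor space is therefore $\mathcal H_r=\sum_{i=1}^r \mathds 1\otimes\cdots\otimes \mathcal H_1\otimes\cdots\otimes\mathds 1$. This is again Hermitian and sparse, acts on $O(rm)$ qubits, and commutator corrections from operator ordering never enter, since the evolution is linear in each slot. We then proceed as follows:
\begin{itemize}
    \item[(a)] Prepare the normalized state $\ket{\psi(0)}\propto\bigoplus_r\sum Z_{\alpha_1\ldots\alpha_r}\ket{\alpha_1\ldots\alpha_r}$. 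Since only $\polylog(M)$ amplitudes are nonzero, this takes polynomial time.
    \item[(b)] Apply sparse Hamiltonian simulation of $\bigoplus_r \mathcal H_r$ for time $t_f$. The cost is $\poly(m,t_f,d)$, because $\norm{\mathcal H_1}\le \norm{B}\norm{D}=O(d)$ under bounded entries.
    \item[(c)] Measure the projector onto $I_0$. The success probability equals $\zeta(I_0;t_f)/\norm{\psi(0)}^2$, and this is conserved by unitarity. With $a-b=\Omega(1/\polylog M)$, polynomially many repetitions decide the promise.
\end{itemize}
We must also check that $I_0$, specified in the original coordinates, is efficiently characterizable in the edge coordinates. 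Following the statement, we take $I_0$ directly in the $\alpha$ labels. Finally, the normalization $\norm{\psi(0)}^2$ must be polynomially bounded relative to $a-b$; we argue this from the polynomial-precision input.

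\emph{Hardness.} We reduce from a known \BQP-complete problem. The cleanest route is the classical-oscillator problem of Ref.~\cite{Babbush_2023}. It is the special case $C=\mathds 1$ (or diagonal mass matrix) with $R=1$, because the first-moment Heisenberg equations coincide with Newton's equations and $\zeta$ with $r=1$ reproduces their energy-population readout on edges and vertices. We also record the second route via continuous-time quantum walks with $A=C$. This unifies both known cases, but only one reduction is needed.

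\emph{Main obstacle.} The main difficulty is the membership step for $R>1$. We must verify that the tensor-sum generator is exactly Hermitian in the chosen redundant coordinates, including the symmetrized/ordered moments and the $i$ factors. We must also ensure that the norm of the encoded vector, which is conserved as an energy-like quantity, stays polynomially related to the promise gap. The first thing to try is to compute $\partial_t\langle\prod z_{\alpha_i}\rangle$ directly from Eq.~\eqref{eq:HamiltonEQ} and confirm it is the Leibniz rule with $\mathcal H_1$ in each slot.
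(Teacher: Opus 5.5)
Your proposal is correct and follows essentially the same route as the paper. Hardness comes from the $R=1$, diagonal-$C$ classical-oscillator sector of Ref.~\cite{Babbush_2023}. Membership comes from amplitude-encoding the moments in the coordinates $iB^\dagger q$, $D^\dagger p$, evolving them under the Leibniz-rule tensor sum of the Hermitian generator $H_0=\begin{pmatrix}0 & B^\dagger D\\ D^\dagger B & 0\end{pmatrix}$ by sparse Hamiltonian simulation, and estimating $\zeta(I_0;t_f)$ by flagging $I_0$ on an ancilla (the paper also takes $I_0$ directly in the $\alpha$ labels, as you do).

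The normalization issue you flag is left implicit in the paper as well: it only notes that $N$ is conserved by the unitary evolution, whereas the estimate also requires $N^2=\|\psi(0)\|^2$ to be polynomially bounded relative to $a-b$.
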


\BQP-hardness follows immediately since the classical oscillator dynamics of \cite{Babbush_2023} arises as the $R=1$ sector of Problem \ref{prob:QO}. The equations of motion for the first moments $\braket{q_j}$ and $\braket{p_j}$ (corresponding to $R=1$) of a quadratic quantum Hamiltonian are identical to those of the corresponding classical Hamiltonian \eqref{eq:HOHam}. For a diagonal mass matrix $C$, this describes a system of masses coupled by springs, which has been shown to be \BQP-hard \cite{Babbush_2023}.

To prove containment in \BQP, we proceed in three steps: \textit{(i)} Since
$R$-local observables are closed under the dynamics governed by quadratic
Hamiltonians, we first encode the initial values, $\braket{z_{\alpha_1}(0) \cdots z_{\alpha_r}(0)}$, into a quantum state $\ket\psi$, which is efficiently preparable given a polynomial description of the initial moments. \textit{(ii)} We construct an effective Hamiltonian $H_0$ which governs the time evolution of $\ket\psi$. It is sparse and can be efficiently simulated on a quantum computer, using for instance a block-encoding of $H_0$.
\textit{(iii)} The readout of the form of Eq.~\eqref{eq:read_out} can be estimated by measurements in the computational basis.

\paragraph*{(i) Encoding.} 
In order to simulate quantum oscillators in the Heisenberg picture, we encode the expectation values of all products of up to $R$ generalized coordinates $z_\alpha$ into a quantum state\footnote{
The information about the same generalized coordinates $z_a$ is stored in multiple basis states $\ket{\alpha_1 ... \alpha_r}$ by redundancy in $B$ and $D$. There are ways to reduce memory by constant factors, but we stick to this encoding for the sake of clarity.
}
\begin{align}
    \ket{\psi(t)} = \frac{1}{N} \sum_{r=1}^R \sum_{\alpha_1, ..., \alpha_r = 1}^{2\binom{M+1}{2}} \braket{\prod_{i=1}^r z_{\alpha_i}(t)} \ket{\alpha_1 ... \alpha_r}\ ,
    \label{eq:encoding}
\end{align}
which contains all expectation values of $R$-local monomials in the generalized coordinates at a given time $t$ (which will allow for the simulation of their
dynamics).\footnote{
We can directly read off the coordinate $z_a$ with $a=j$ ($a=\bar j$) from the amplitude $\braket{z_{(a,a)}}$, given that its interaction to the wall given by $k_{jj}$ ($l_{jj}$) is nonzero. Otherwise, the measurement of the relative coordinates $z_{j,k}$ ($z_{\bar j, \bar k}$) is necessary to access $z_a$, see Appendix~\ref{app:measurement}.} 
$\ket{\psi(t)}$ can be encoded using $\log\left(
\sum_{r=1}^R \left( 2\binom{M+1}{2}  \right)^r \right) = \mathcal{O}(R \log(M))$
qubits using amplitude encoding. Such a state can be efficiently prepared for polynomially many nonzero amplitudes \cite{Gleinig_2021, ramacciotti2023simplequantumalgorithmefficiently}, such as for the initial state in Problem \ref{prob:QO}.

\paragraph*{(ii) Hamiltonian Encoding of Dynamics.}
The dynamics of the amplitudes $\braket{\prod_{i=1}^r z_{\alpha_i}(t)}$ are governed by the von Neumann equation
\eqref{eq:HamiltonEQ},
which, using the Leibniz rule for derivations, 
straightforwardly evaluates to 
\begin{align}
    \partial_t \braket{z_{\alpha_1} ... z_{\alpha_r}} &= \sum_{j=1}^r \braket{ z_{\alpha_1} \cdots z_{\alpha_{j-1}} (i H_0 z)_{\alpha_j} z_{\alpha_{j+1}} \cdots z_{\alpha_r} } \nonumber \\
    &\text{with} \quad H_0 = \begin{pmatrix}0 & B^\dagger D \\ D^\dagger B & 0\end{pmatrix}.
    \label{eq:eom_k}
\end{align}
Written more compactly, this is an equation for the rank-$r$ tensor 
\begin{align}
    \partial_t \braket{z^{\otimes r}} = i \sum_j H_0^{(j)} \braket{z^{\otimes r}},
    \label{eq:eom_k_tensor}
\end{align}
using the shorthand $H_0^{(j)}$, which has support only on the $j^\text{th}$ tensor factor. Since for each $r=1,\dots,R$, Eq.~\eqref{eq:eom_k_tensor} describes a Hamiltonian evolution of the corresponding sector of $\ket{\psi(t)}$, Eq.~\eqref{eq:encoding}, the encoded state $\ket{\psi(t)}$ evolves according to a Schrödinger equation.

Starting from an oracle for the $d$-sparse coupling matrices $A$ and $C$, one can build $d$-sparse oracles for $B, B^\dagger$ and $D, D^\dagger$ \cite{zschetzsche2025directequivalencedynamicsquantum}. With this, it is straightforward to construct an oracle for the $rd$-sparse Hamiltonian on the right hand side of Eq.~\eqref{eq:eom_k_tensor}. Evolution under sparse Hamiltonians can be efficiently simulated, for instance using one of the methods in \cite{Berry_2015, Berry_2015b,Low_2017, Low_2019}. In direct analogy to \cite{Babbush_2023}, we can ensure a final error $\epsilon$ using $Q = \mathcal{O}(t K d \norm{H_0}_{\rm max} + \log(1/\epsilon))$ queries of $H_0$ and a number of gates $G = \mathcal{O}(Q \log^2(\frac{M Q}{\epsilon}))$.

\paragraph*{(iii) Measurement.}
The read-out of the problem is the estimation of the sum of the squared amplitudes 
\begin{align}
    \abs{\braket{\prod_{i=1}^r z_{\alpha_i}}}^2 = \abs{\braket{\alpha_1 ... \alpha_r | \psi(t)}}^2 N^2
\end{align}
over $(\alpha_1, ..., \alpha_r) \in I_0$ for an efficiently characterizable index set $I_0$. The normalization factor $N$ from Eq.~\eqref{eq:encoding} is set by the initial conditions and remains unchanged under unitary time evolution.

We use an ancilla qubit to encode those basis states that belong to $I_0$ via 
\begin{align}
    \ket{\alpha_1, ..., \alpha_r} \ket{0} \mapsto \ket{\alpha_1, ..., \alpha_r} \ket{f(\alpha_1, ..., \alpha_r)}
\end{align}
with $f(\alpha_1, ..., \alpha_r) = 1$ if $(\alpha_1, ..., \alpha_r) \in I_0$ and 0 else. Finally, the probability to measure the ancilla qubit in the state $\ket{1}$ yields the desired estimate, solving problem \ref{prob:QO}. It is also possible to measure the reconstructed amplitudes $\braket{z_a}$ including their sign. We discuss this in appendix \ref{app:measurement}.

\emph{Quantum Walks are Inertially Coupled Bosons.---}We now relate inertially coupled bosonic systems to continuous-time quantum walks, a canonical model of quantum computation known to be \BQP-complete \cite{Childs_2003, Childs_2009}. It was shown in \cite{zschetzsche2025directequivalencedynamicsquantum} that classical oscillator networks can directly simulate such quantum walks, thereby recovering the \BQP-hardness result of \cite{Babbush_2023}. Both descriptions naturally fall within the class of Hamiltonians for inertially coupled bosons.

Consider a quantum walk on a graph with $M = 2^m$ vertices and adjacency matrix
$T$. The walk corresponds to a single excitation evolving under the
number-preserving bosonic Hamiltonian $H = - \sum_{j,k} T_{jk} a_j^\dagger a_k$.
Within the single excitation sector, replacing $T\;\rightsquigarrow\;
T-c\openone=:-\tilde T$ amounts to a constant energy shift $-c\sum a_j^\dagger
a_j$, which only contributes an unobservable global phase to the evolution; the
Hamiltonian $H=\sum_{j,k} \tilde T_{jk} a_j^\dagger a_k$  thus describes an
equivalent dynamics. If we choose  $c$ at least the maximum vertex degree, 
$\tilde T$  is PSD. 
Substituting $a_j = \tfrac{1}{\sqrt{2}}(q_j + i p_j)$ and $a_j^\dagger = \tfrac{1}{\sqrt{2}}(q_j - i p_j)$, we then obtain
\begin{align}
    H &= \frac{1}{2} \sum_{j,k} \tilde T_{jk} \left( q_j - i p_j \right) \left( q_k + i p_k \right) \nonumber \\
    &= \frac{1}{2} \sum_{j,k} \tilde T_{jk} \left( q_j q_k + p_j p_k + i (q_j p_k - p_j q_k) \right)\:.
    \label{eq:quantum_walk_pre}
\end{align}
As $\tilde T$ is symmetric, and the mixed term is antisymmetric under $j \leftrightarrow k$, 
it vanishes upon summation up to an irrelevant constant
$-\tfrac{1}{2}\sum_j \tilde T_{jj}\openone$. We thus obtain
\begin{align}
    H &= \frac{1}{2} \sum_{j,k} \tilde T_{jk} \left( q_j q_k + p_j p_k \right)\ .
    \label{eq:quantum_walk}
\end{align}
This is precisely the Hamiltonian of inertially coupled bosons, Eq.~\eqref{eq:HOHam}, with positive semidefinite coupling matrices $A = C = \tilde T$. 

Note that $\tilde T$, and thus $A$ and $C$, have precisely the structure required for a network of harmonically coupled oscillators: Its off-diagonal elements are non-positive,
\begin{align}
    A_{jk} = C_{jk} = -k_{jk}\le0 \quad (j \ne k)\ ,
    \label{eq:sign_convention_I}
\end{align}
while the diagonal entries satisfy
\begin{align}
    A_{jj} = C_{jj} = k_{jj} + \sum_{k \ne j} k_{jk}\ ,
    \label{eq:sign_convention_II}
\end{align}
with $k_{jk} \ge 0$ interpreted as spring constants between modes $j$ and $k$ making it a Laplacian-type stiffness matrix. 

More generally, number-preserving hopping Hamiltonians with complex amplitudes yield additional antisymmetric terms of the form $q_j p_k - q_k p_j$ in the $(q,p)$ representation. A detailed characterization of such terms is given in Appendix~\ref{app:coordinates}.

\emph{Simulating general quadratic Hamiltonians is \PostBQP-hard.---}We now show that
extending the bosonic Hamiltonian beyond the \BQP-complete subclass identified
above fundamentally increases the computational power of the dynamics, elevating
the simulation problem to become \PostBQP-hard (that is, quantum circuits 
augmented by the possibility to project onto specific measurement outcomes).
The
key distinction is that more general quadratic Hamiltonians, beyond those
considered up to now, induce a linear evolution on the first moments which is no
longer norm-preserving. Such terms enable the exponential amplification or
suppression of selected components, which in turn can be used to emulate quantum
circuits with postselection.  

\begin{theorem} \label{thm:PostBQP}
    The variant of problem \ref{prob:QO} in which $H = \frac{1}{2} \sum_{j,k} H_{jk} z_j z_k$ is an arbitrary, sparse and efficiently queriable quadratic Hamiltonian is \PostBQP-hard.
\end{theorem}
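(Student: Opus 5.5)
We reduce an arbitrary \PostBQP{} computation to the simulation of first moments ($R=1$) under a general sparse quadratic Hamiltonian. Let $U=U_T\cdots U_1$ be a polynomial-size circuit on $n$ qubits with postselection qubit $p$ and output qubit $o$. Deciding a \PostBQP{} language amounts to deciding whether $\Pr[o=1\mid p=1]\ge 2/3$ or $\le 1/3$, and we may assume $\Pr[p=1]\ge 2^{-\poly(n)}$. The plan is to show that, for a suitable sparse Hamiltonian $H$, the vector of first moments $\braket{z(t)}$ evolves under a linear, non-unitary map. This map first implements $U$ unitarily on an $M=2^{m}$-dimensional encoding of the circuit state, and then multiplies the amplitudes with $p=1$ by an exponentially large factor $e^{\lambda t}$ while leaving those with $p=0$ alone. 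The population $\zeta(I_0;t_f)$ restricted to the $o=1$ (respectively $o=0$) components then approximates $e^{2\lambda t}\Pr[o=1,p=1]$ (respectively $\Pr[o=0,p=1]$), plus an unamplified remainder of size at most $1$.

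\textbf{Step 1 (unitary part).} We use the quantum-walk and hopping embedding established earlier, together with Theorem~\ref{thm:BQP}. The encoding is $\braket{a_j}\leftrightarrow$ amplitude of basis state $j$, which for a number-preserving Hamiltonian $\sum_{jk}h_{jk}a_j^\dagger a_k$ evolves as $\partial_t\braket{a}=-ih\braket{a}$. We make this time-independent either by a Feynman--Kitaev clock construction, or by the standard \BQP-hardness construction of quantum walks \cite{Childs_2009}, whose sparse $h$ yields $U\ket{0}$ on designated vertices at time $\poly(n)$ with amplitude $1/\poly(n)$. Step 2 will have to amplify this extra polynomial loss as well, which costs nothing essential.

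\textbf{Step 2 (amplification).} We add squeezing-type terms, which are absent from Eq.~\eqref{eq:HOHam}: $H_{\rm sq}=\tfrac{\lambda}{2}\sum_{j\in P}(q_jp_j+p_jq_j)$, summed over modes $P$ encoding output vertices with $p=1$. In the Heisenberg picture this gives $\partial_t q_j=\lambda q_j$ and $\partial_t p_j=-\lambda p_j$. Placing the initial data in the $q$-quadratures and reading out only the $q$'s produces the factor $e^{\lambda t}$. Choosing $\lambda t_f=\poly(n)$ yields amplification $2^{\poly(n)}$, which exceeds $1/\Pr[p=1]$. We must ensure that the walk and the squeezing do not interfere. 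The cleanest option is to make both parts time-independent and sequential by routing the walk's output along a long path (a ``wire'') into ``sink'' modes carrying the squeezing term, with no further coupling back. In the $(q,p)$ representation, the hopping term for complex $h$ contributes $E^-$, the antisymmetric block described in Appendix~\ref{app:coordinates}, and the squeezing contributes a diagonal $E^+$. Both are sparse and efficiently queriable, so $H$ has the form allowed in the theorem.

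\textbf{Step 3 (decision).} We then compute $\zeta$ over $I_0=\{$sink modes with $o=1\}$ and over its complement, obtaining $e^{2\lambda t}\Pr[o,p=1]/\poly$ up to additive $O(1)$. To fit the single-threshold format of Problem~\ref{prob:QO}, we postselect-amplify both $o=0$ and $o=1$ branches but read out their difference sign. Concretely, we flip $o$ conditioned on $p$ via a circuit trick so that the relevant quantity becomes a single population compared against a threshold. Alternatively, we use the standard fact that \PostBQP{} $=\PP$, which allows a reduction from deciding whether a single amplitude-squared exceeds a threshold of the form $2^{-k}$ up to a constant factor: rescale the initial moments by $2^{k}$, and the gap $a-b$ becomes constant.

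\textbf{Main obstacle.} The delicate point is Step 2's interplay with the walk dynamics. Squeezing terms in a time-independent Hamiltonian act throughout the evolution, and amplitude leaking back from sinks, or unamplified reflections from the wire, could spoil the estimate. I would first try a directional construction in which sinks are coupled to the wire only by a one-way transfer. This is impossible unitarily, so I would instead use a long wire with time-of-arrival control, so that by time $t_f$ only the intended amplitude has entered the sinks. Failing that, I would bound the error through the fact that all non-sink dynamics is norm-preserving, so contamination is $O(1)$ against an exponentially amplified signal.
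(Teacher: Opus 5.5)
There is a genuine gap, and it is exactly the ``main obstacle'' you leave open: Step~2, as written, does not go through.

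For a time-independent quadratic $H$ the first moments obey $\dot z=Gz$ with one fixed generator. Growth is governed by the spectrum of $G$ as a whole, not by the local rule $\partial_t q_j=\lambda q_j$, which ignores the hopping terms acting on the sink. With those terms present, squeezing need not cause any growth. Already for one mode, $H=\tfrac{\omega}{2}(q^2+p^2)+\lambda qp$ has rates $\pm\sqrt{\lambda^2-\omega^2}$; this is the averaging-out effect the paper mentions. In your embedding $A=C=h$ one finds $\ddot q=-h^2q+[E^+,h]\,p+(E^+)^2q$, so the squeezing does not even act as a clean potential. ``Time-of-arrival control'' is not meaningful either: every growing eigenmode is populated from $t=0$ through its overlap with the initial data.

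Your fallback (non-sink dynamics is norm-preserving, so contamination stays $O(1)$) fails for the symmetric hopping coupling you propose. The sink feeds back into the wire, and everything overlapping a growing mode is amplified along with the signal. A one-way transfer is in fact available in this class, e.g.\ via terms $q_jp_k$, but that does not remove the next problem. Whatever the coupling, any error in the $p=1$ sector is amplified by the same factor as the signal, whose amplitude $\sqrt{\Pr[p=1]}$ can be as small as $2^{-\poly(n)}$. So the Childs scattering-walk option, with its $1/\poly(n)$ implementation error, cannot work. The pre-postselection state must be produced exactly (or to exponential accuracy), and the amplification factor must provably be the same for every computational basis state.

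The missing idea is an \emph{exact} splitting into invariant sectors, in which the $0_P$ branch evolves boundedly and the $1_P$ branch is multiplied by a factor independent of the computational state. The paper obtains this from your FK option with $C=\mathds{1}$, $A=4\mathds{1}-H_{\rm FK}$ and $E^+=-\beta\ket{L+2}\bra{L+2}\otimes\ket{1_P}\bra{1_P}$. Since $[C,E^+]=0$, the $p$-terms cancel and $\ddot q=-(A-(E^+)^2)q$, a pure potential well. The basis change $S=\sum_l\ket{l}\bra{l}\otimes U_{L+1}\cdots U_l$ maps this generator to $X_0\otimes\ket{0_P}\bra{0_P}\otimes\mathds{1}+X_1\otimes\ket{1_P}\bra{1_P}\otimes\mathds{1}$, which acts on the clock only. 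Hence the amplified part is exactly a clock scalar times $\ket{\Psi_1}$, and $X_0$ (spectrum in $[2,6]$) gives bounded dynamics.

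What remains is a spectral lemma. For $\beta>2$ the chain $X_1$ with boundary well $-\beta^2$ has a unique negative eigenvalue $\alpha_1\le4-\beta^2$. Its eigenvector has amplitude $\Omega(e^{-L\kappa_1}/\sqrt{L})$ at $l=1$ and $\Omega(1)$ weight on the read-out layer $l=L+1$. That layer sits one before the squeezed one, so that $p=\dot q$ there. Thus $t_f=O(L)$ suffices to beat $e^{L\kappa_1}/\delta$.

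Finally, drop the Step~3 alternative of rescaling the initial moments by $2^k$. By your own reasoning it would make the unitary dynamics of Problem~\ref{prob:QO} \PP-hard without any squeezing, contradicting Theorem~\ref{thm:BQP} unless $\BQP=\PP$. The containment proof of Theorem~\ref{thm:BQP} needs the normalization $N$ to stay polynomially bounded. The paper instead compares two read-outs at layer $L+1$, with and without the $1_Q$ restriction, and accepts if their ratio exceeds $1/2$.
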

The proof, given in Appendix~\ref{app:PostBQP}, is based on the Feynman–Kitaev
construction, as also used to establish \BQP-completeness for classical
oscillators~\cite{Babbush_2023}. However, by additionally introducing squeezing
terms of the form $q_j p_j$ via $E^+$ in Eq.~\eqref{eq:QO}, we are able to implement an exponential suppression
of the computational branch corresponding to an output qubit in
state $\ket{0_P}$ relative to $\ket{1_P}$, thereby approximately realizing
the desired postselection on the state $\ket{1_P}$ at the end of the circuit. 

Other families of Hamiltonians with $E^+ = 0$ can also implement postselection. As we show in Appendix~\ref{app:alt_PostBQP}, negative eigenvalues in $A$ or mixed terms from $E^- \neq 0$ can be used to create the same dynamics as in the aforementioned family with $E^+ \neq 0$.


\emph{Conclusions.---}We have studied the complexity of simulating the time
evolution of quadratic bosonic Hamiltonians, acting on an exponentially large
number of modes $M=2^n$, when restricted to moments of bounded order $R$.  We
have identified two regimes: On the one side, we have introduced the class of
inertially coupled bosons, that is, systems without coupling between positions
and momenta, and shown that simulating its dynamics is $\BQP$-complete, and that
this class encompasses the known \BQP-complete problems of simulating quantum walks and 
classical harmonic oscillators as special cases. We have also shown
how to generalize this class such as to encompass all hopping Hamiltonians while
retaining \BQP-completeness. On the other side, we found that any further
generalization beyond this class, but still within the family of bosonic quadratic
Hamiltonians, gave rise to a sudden jump in complexity, making the problem
\PostBQP-hard. Interestingly, the corresponding classical complexity class \PP,
which equals \PostBQP, naturally captures the complexity of simulating arbitrary
exponentially large bosonic quadratic Hamiltonians, regardless of the family; it
would be interesting to investigate this apparent vanishing of the difference
in complexity of classical and quantum simulation more closely.

A noteworthy point is the role played by squeezing in this complexity
transition. On the one hand, we find that all hopping Hamiltonians (that is,
those without squeezing terms) are inside \BQP. On the other hand, the presence
of a squeezing term does not immediately imply \PostBQP-hardness, as can be
seen by e.g.\ setting $A\ne C$ in Eq.~\eqref{eq:HOHam}. The reason for this can
be intuitively grasped already for a single mode: The presence of a squeezing
term in the Hamiltonian does not necessarily imply that squeezing builds up over
time, as rotating terms can lead to an averaging out of the squeezing. Let us remark
that a similar effect of squeezing on complexity has been observed in the context of
simulating photonic circuits with  particle number-conserving gates: As
long as only a bounded number of squeezing operations is allowed, the problem is
\BQP-complete~\cite{Barthe_2025}, while it becomes \PostBQP-hard once a linear
amount of squeezing is permitted~\cite{Alice2025}.

\emph{Acknowledgments.---}We acknowledge support from the Austrian Science Fund FWF (Grant
    No.\ \href{https://doi.org/10.55776/COE1}{10.55776/COE1} and 
    \href{https://doi.org/10.55776/F71}{10.55776/F71}), by the
    European Union -- NextGenerationEU, and by the European Union’s Horizon 2020
    research and innovation programme through Grant No.\ 863476 (ERC-CoG
    \mbox{SEQUAM}).

\bibliographystyle{apsrev4-2}
\bibliography{literature}

\begin{thebibliography}{17}%
\makeatletter
\providecommand \@ifxundefined [1]{%
 \@ifx{#1\undefined}
}%
\providecommand \@ifnum [1]{%
 \ifnum #1\expandafter \@firstoftwo
 \else \expandafter \@secondoftwo
 \fi
}%
\providecommand \@ifx [1]{%
 \ifx #1\expandafter \@firstoftwo
 \else \expandafter \@secondoftwo
 \fi
}%
\providecommand \natexlab [1]{#1}%
\providecommand \enquote  [1]{``#1''}%
\providecommand \bibnamefont  [1]{#1}%
\providecommand \bibfnamefont [1]{#1}%
\providecommand \citenamefont [1]{#1}%
\providecommand \href@noop [0]{\@secondoftwo}%
\providecommand \href [0]{\begingroup \@sanitize@url \@href}%
\providecommand \@href[1]{\@@startlink{#1}\@@href}%
\providecommand \@@href[1]{\endgroup#1\@@endlink}%
\providecommand \@sanitize@url [0]{\catcode `\\12\catcode `\$12\catcode
  `\&12\catcode `\#12\catcode `\^12\catcode `\_12\catcode `\%12\relax}%
\providecommand \@@startlink[1]{}%
\providecommand \@@endlink[0]{}%
\providecommand \url  [0]{\begingroup\@sanitize@url \@url }%
\providecommand \@url [1]{\endgroup\@href {#1}{\urlprefix }}%
\providecommand \urlprefix  [0]{URL }%
\providecommand \Eprint [0]{\href }%
\providecommand \doibase [0]{https://doi.org/}%
\providecommand \selectlanguage [0]{\@gobble}%
\providecommand \bibinfo  [0]{\@secondoftwo}%
\providecommand \bibfield  [0]{\@secondoftwo}%
\providecommand \translation [1]{[#1]}%
\providecommand \BibitemOpen [0]{}%
\providecommand \bibitemStop [0]{}%
\providecommand \bibitemNoStop [0]{.\EOS\space}%
\providecommand \EOS [0]{\spacefactor3000\relax}%
\providecommand \BibitemShut  [1]{\csname bibitem#1\endcsname}%
\let\auto@bib@innerbib\@empty
\bibitem [{\citenamefont {Kraus}(2011)}]{Kraus2011}%
  \BibitemOpen
  \bibfield  {author} {\bibinfo {author} {\bibfnamefont {B.}~\bibnamefont
  {Kraus}},\ }\href {https://doi.org/10.1103/PhysRevLett.107.250503} {\bibfield
   {journal} {\bibinfo  {journal} {Phys. Rev. Lett.}\ }\textbf {\bibinfo
  {volume} {107}},\ \bibinfo {pages} {250503} (\bibinfo {year}
  {2011})}\BibitemShut {NoStop}%
\bibitem [{\citenamefont {Babbush}\ \emph {et~al.}(2023)\citenamefont
  {Babbush}, \citenamefont {Berry}, \citenamefont {Kothari}, \citenamefont
  {Somma},\ and\ \citenamefont {Wiebe}}]{Babbush_2023}%
  \BibitemOpen
  \bibfield  {author} {\bibinfo {author} {\bibfnamefont {R.}~\bibnamefont
  {Babbush}}, \bibinfo {author} {\bibfnamefont {D.~W.}\ \bibnamefont {Berry}},
  \bibinfo {author} {\bibfnamefont {R.}~\bibnamefont {Kothari}}, \bibinfo
  {author} {\bibfnamefont {R.~D.}\ \bibnamefont {Somma}},\ and\ \bibinfo
  {author} {\bibfnamefont {N.}~\bibnamefont {Wiebe}},\ }\bibfield  {journal}
  {\bibinfo  {journal} {Physical Review X}\ }\textbf {\bibinfo {volume} {13}},\
  \href {https://doi.org/10.1103/physrevx.13.041041}
  {10.1103/physrevx.13.041041} (\bibinfo {year} {2023})\BibitemShut {NoStop}%
\bibitem [{\citenamefont {Childs}\ \emph {et~al.}(2003)\citenamefont {Childs},
  \citenamefont {Cleve}, \citenamefont {Deotto}, \citenamefont {Farhi},
  \citenamefont {Gutmann},\ and\ \citenamefont {Spielman}}]{Childs_2003}%
  \BibitemOpen
  \bibfield  {author} {\bibinfo {author} {\bibfnamefont {A.~M.}\ \bibnamefont
  {Childs}}, \bibinfo {author} {\bibfnamefont {R.}~\bibnamefont {Cleve}},
  \bibinfo {author} {\bibfnamefont {E.}~\bibnamefont {Deotto}}, \bibinfo
  {author} {\bibfnamefont {E.}~\bibnamefont {Farhi}}, \bibinfo {author}
  {\bibfnamefont {S.}~\bibnamefont {Gutmann}},\ and\ \bibinfo {author}
  {\bibfnamefont {D.~A.}\ \bibnamefont {Spielman}},\ }in\ \href
  {https://doi.org/10.1145/780542.780552} {\emph {\bibinfo {booktitle}
  {Proceedings of the thirty-fifth annual ACM symposium on Theory of
  computing}}},\ \bibinfo {series and number} {STOC03}\ (\bibinfo  {publisher}
  {ACM},\ \bibinfo {year} {2003})\BibitemShut {NoStop}%
\bibitem [{\citenamefont {Childs}(2009)}]{Childs_2009}%
  \BibitemOpen
  \bibfield  {author} {\bibinfo {author} {\bibfnamefont {A.~M.}\ \bibnamefont
  {Childs}},\ }\bibfield  {journal} {\bibinfo  {journal} {Physical Review
  Letters}\ }\textbf {\bibinfo {volume} {102}},\ \href
  {https://doi.org/10.1103/physrevlett.102.180501}
  {10.1103/physrevlett.102.180501} (\bibinfo {year} {2009})\BibitemShut
  {NoStop}%
\bibitem [{\citenamefont {Somma}\ \emph {et~al.}(2025)\citenamefont {Somma},
  \citenamefont {King}, \citenamefont {Kothari}, \citenamefont {O'Brien},\ and\
  \citenamefont {Babbush}}]{somma2025shadowhamiltoniansimulation}%
  \BibitemOpen
  \bibfield  {author} {\bibinfo {author} {\bibfnamefont {R.~D.}\ \bibnamefont
  {Somma}}, \bibinfo {author} {\bibfnamefont {R.}~\bibnamefont {King}},
  \bibinfo {author} {\bibfnamefont {R.}~\bibnamefont {Kothari}}, \bibinfo
  {author} {\bibfnamefont {T.}~\bibnamefont {O'Brien}},\ and\ \bibinfo {author}
  {\bibfnamefont {R.}~\bibnamefont {Babbush}},\ }\href
  {https://arxiv.org/abs/2407.21775} {\bibinfo {title} {Shadow hamiltonian
  simulation}} (\bibinfo {year} {2025}),\ \Eprint
  {https://arxiv.org/abs/2407.21775} {arXiv:2407.21775 [quant-ph]} \BibitemShut
  {NoStop}%
\bibitem [{\citenamefont {Gleinig}\ and\ \citenamefont
  {Hoefler}(2021)}]{Gleinig_2021}%
  \BibitemOpen
  \bibfield  {author} {\bibinfo {author} {\bibfnamefont {N.}~\bibnamefont
  {Gleinig}}\ and\ \bibinfo {author} {\bibfnamefont {T.}~\bibnamefont
  {Hoefler}},\ }in\ \href {https://doi.org/10.1109/DAC18074.2021.9586240}
  {\emph {\bibinfo {booktitle} {2021 58th ACM/IEEE Design Automation Conference
  (DAC)}}}\ (\bibinfo {year} {2021})\ pp.\ \bibinfo {pages}
  {433--438}\BibitemShut {NoStop}%
\bibitem [{\citenamefont {Ramacciotti}\ \emph {et~al.}(2023)\citenamefont
  {Ramacciotti}, \citenamefont {Lefterovici},\ and\ \citenamefont
  {Rotundo}}]{ramacciotti2023simplequantumalgorithmefficiently}%
  \BibitemOpen
  \bibfield  {author} {\bibinfo {author} {\bibfnamefont {D.}~\bibnamefont
  {Ramacciotti}}, \bibinfo {author} {\bibfnamefont {A.-I.}\ \bibnamefont
  {Lefterovici}},\ and\ \bibinfo {author} {\bibfnamefont {A.~F.}\ \bibnamefont
  {Rotundo}},\ }\href {https://arxiv.org/abs/2310.19309} {\bibinfo {title} {A
  simple quantum algorithm to efficiently prepare sparse states}} (\bibinfo
  {year} {2023}),\ \Eprint {https://arxiv.org/abs/2310.19309} {arXiv:2310.19309
  [quant-ph]} \BibitemShut {NoStop}%
\bibitem [{\citenamefont {Zschetzsche}\ \emph {et~al.}(2025)\citenamefont
  {Zschetzsche}, \citenamefont {Mansuroglu}, \citenamefont {Molnár},\ and\
  \citenamefont {Schuch}}]{zschetzsche2025directequivalencedynamicsquantum}%
  \BibitemOpen
  \bibfield  {author} {\bibinfo {author} {\bibfnamefont {L.}~\bibnamefont
  {Zschetzsche}}, \bibinfo {author} {\bibfnamefont {R.}~\bibnamefont
  {Mansuroglu}}, \bibinfo {author} {\bibfnamefont {A.}~\bibnamefont
  {Molnár}},\ and\ \bibinfo {author} {\bibfnamefont {N.}~\bibnamefont
  {Schuch}},\ }\href {https://arxiv.org/abs/2512.03681} {\bibinfo {title}
  {Direct equivalence between dynamics of quantum walks and coupled classical
  oscillators}} (\bibinfo {year} {2025}),\ \Eprint
  {https://arxiv.org/abs/2512.03681} {arXiv:2512.03681 [quant-ph]} \BibitemShut
  {NoStop}%
\bibitem [{\citenamefont {Berry}\ \emph
  {et~al.}(2015{\natexlab{a}})\citenamefont {Berry}, \citenamefont {Childs},
  \citenamefont {Cleve}, \citenamefont {Kothari},\ and\ \citenamefont
  {Somma}}]{Berry_2015}%
  \BibitemOpen
  \bibfield  {author} {\bibinfo {author} {\bibfnamefont {D.~W.}\ \bibnamefont
  {Berry}}, \bibinfo {author} {\bibfnamefont {A.~M.}\ \bibnamefont {Childs}},
  \bibinfo {author} {\bibfnamefont {R.}~\bibnamefont {Cleve}}, \bibinfo
  {author} {\bibfnamefont {R.}~\bibnamefont {Kothari}},\ and\ \bibinfo {author}
  {\bibfnamefont {R.~D.}\ \bibnamefont {Somma}},\ }\bibfield  {journal}
  {\bibinfo  {journal} {Physical Review Letters}\ }\textbf {\bibinfo {volume}
  {114}},\ \href {https://doi.org/10.1103/physrevlett.114.090502}
  {10.1103/physrevlett.114.090502} (\bibinfo {year}
  {2015}{\natexlab{a}})\BibitemShut {NoStop}%
\bibitem [{\citenamefont {Berry}\ \emph
  {et~al.}(2015{\natexlab{b}})\citenamefont {Berry}, \citenamefont {Childs},\
  and\ \citenamefont {Kothari}}]{Berry_2015b}%
  \BibitemOpen
  \bibfield  {author} {\bibinfo {author} {\bibfnamefont {D.~W.}\ \bibnamefont
  {Berry}}, \bibinfo {author} {\bibfnamefont {A.~M.}\ \bibnamefont {Childs}},\
  and\ \bibinfo {author} {\bibfnamefont {R.}~\bibnamefont {Kothari}},\ }in\
  \href {https://doi.org/10.1109/focs.2015.54} {\emph {\bibinfo {booktitle}
  {2015 IEEE 56th Annual Symposium on Foundations of Computer Science}}}\
  (\bibinfo  {publisher} {IEEE},\ \bibinfo {year} {2015})\ p.\ \bibinfo {pages}
  {792–809}\BibitemShut {NoStop}%
\bibitem [{\citenamefont {Low}\ and\ \citenamefont {Chuang}(2017)}]{Low_2017}%
  \BibitemOpen
  \bibfield  {author} {\bibinfo {author} {\bibfnamefont {G.~H.}\ \bibnamefont
  {Low}}\ and\ \bibinfo {author} {\bibfnamefont {I.~L.}\ \bibnamefont
  {Chuang}},\ }\bibfield  {journal} {\bibinfo  {journal} {Physical Review
  Letters}\ }\textbf {\bibinfo {volume} {118}},\ \href
  {https://doi.org/10.1103/physrevlett.118.010501}
  {10.1103/physrevlett.118.010501} (\bibinfo {year} {2017})\BibitemShut
  {NoStop}%
\bibitem [{\citenamefont {Low}\ and\ \citenamefont {Chuang}(2019)}]{Low_2019}%
  \BibitemOpen
  \bibfield  {author} {\bibinfo {author} {\bibfnamefont {G.~H.}\ \bibnamefont
  {Low}}\ and\ \bibinfo {author} {\bibfnamefont {I.~L.}\ \bibnamefont
  {Chuang}},\ }\href {https://doi.org/10.22331/q-2019-07-12-163} {\bibfield
  {journal} {\bibinfo  {journal} {Quantum}\ }\textbf {\bibinfo {volume} {3}},\
  \bibinfo {pages} {163} (\bibinfo {year} {2019})}\BibitemShut {NoStop}%
\bibitem [{\citenamefont {Barthe}\ \emph {et~al.}(2025)\citenamefont {Barthe},
  \citenamefont {Cerezo}, \citenamefont {Sornborger}, \citenamefont {Larocca},\
  and\ \citenamefont {García-Martín}}]{Barthe_2025}%
  \BibitemOpen
  \bibfield  {author} {\bibinfo {author} {\bibfnamefont {A.}~\bibnamefont
  {Barthe}}, \bibinfo {author} {\bibfnamefont {M.}~\bibnamefont {Cerezo}},
  \bibinfo {author} {\bibfnamefont {A.~T.}\ \bibnamefont {Sornborger}},
  \bibinfo {author} {\bibfnamefont {M.}~\bibnamefont {Larocca}},\ and\ \bibinfo
  {author} {\bibfnamefont {D.}~\bibnamefont {García-Martín}},\ }\bibfield
  {journal} {\bibinfo  {journal} {Physical Review Letters}\ }\textbf {\bibinfo
  {volume} {134}},\ \href {https://doi.org/10.1103/physrevlett.134.070604}
  {10.1103/physrevlett.134.070604} (\bibinfo {year} {2025})\BibitemShut
  {NoStop}%
\bibitem [{\citenamefont {Barthe}(2025)}]{Alice2025}%
  \BibitemOpen
  \bibfield  {author} {\bibinfo {author} {\bibfnamefont {A.}~\bibnamefont
  {Barthe}},\ }\href@noop {} {}\bibinfo {howpublished} {private communication}
  (\bibinfo {year} {2025})\BibitemShut {NoStop}%
\bibitem [{\citenamefont
  {Aaronson}(2004)}]{aaronson2004quantumcomputingpostselectionprobabilistic}%
  \BibitemOpen
  \bibfield  {author} {\bibinfo {author} {\bibfnamefont {S.}~\bibnamefont
  {Aaronson}},\ }\href {https://arxiv.org/abs/quant-ph/0412187} {\bibinfo
  {title} {Quantum computing, postselection, and probabilistic
  polynomial-time}} (\bibinfo {year} {2004}),\ \Eprint
  {https://arxiv.org/abs/quant-ph/0412187} {arXiv:quant-ph/0412187 [quant-ph]}
  \BibitemShut {NoStop}%
\bibitem [{\citenamefont {Simon}(1995)}]{Simon1995RankOne}%
  \BibitemOpen
  \bibfield  {author} {\bibinfo {author} {\bibfnamefont {B.}~\bibnamefont
  {Simon}},\ }in\ \href@noop {} {\emph {\bibinfo {booktitle} {Mathematical
  Quantum Theory I: Field Theory and Many-Body Theory}}},\ \bibinfo {series}
  {CRM Proceedings and Lecture Notes}, Vol.~\bibinfo {volume} {7},\ \bibinfo
  {editor} {edited by\ \bibinfo {editor} {\bibfnamefont {J.}~\bibnamefont
  {Feldman}}, \bibinfo {editor} {\bibfnamefont {J.}~\bibnamefont {Froehlich}},\
  and\ \bibinfo {editor} {\bibfnamefont {V.}~\bibnamefont {Rivasseau}}}\
  (\bibinfo  {publisher} {American Mathematical Society},\ \bibinfo {address}
  {Providence, RI},\ \bibinfo {year} {1995})\ pp.\ \bibinfo {pages}
  {109--149}\BibitemShut {NoStop}%
\bibitem [{\citenamefont {Teschl}(2000)}]{Teschl2000Jacobi}%
  \BibitemOpen
  \bibfield  {author} {\bibinfo {author} {\bibfnamefont {G.}~\bibnamefont
  {Teschl}},\ }\href@noop {} {\emph {\bibinfo {title} {Jacobi Operators and
  Completely Integrable Nonlinear Lattices}}},\ \bibinfo {series} {Mathematical
  Surveys and Monographs}, Vol.~\bibinfo {volume} {72}\ (\bibinfo  {publisher}
  {American Mathematical Society},\ \bibinfo {address} {Providence, RI},\
  \bibinfo {year} {2000})\BibitemShut {NoStop}%
\end{thebibliography}%

\newpage

\onecolumngrid

\makeatletter
\def\set@footnotewidth{\onecolumngrid}
\def\footnoterule{\kern-6pt\hrule width 1.5in\kern6pt}
\makeatother

\appendix

\section{Post-Processing of Measurement Data}\label{app:measurement}
In this appendix, we extend the discussion on the read-out of quadrature expectation values and discuss how to reconstruct the expectation values $\braket{z_{a_1} \cdots z_{a_r}}$ from the encoded amplitdues $\braket{z_{\alpha_1} \cdots z_{\alpha_r}}$ and how to improve efficiency including all samples in computational basis measurements using the example of linear and quadratic expectation values. 

Because position coordinates are mixed by $B$ and momentum coordinates by $D$, only specific linear combinations of the generalized coordinates $z_a$ are directly accessible through computational-basis measurement. However, the encoding via the indices $\alpha$ is redundant, and expectation values of individual generalized coordinates can be reconstructed from diagonal components. In particular, the coordinates 
\begin{align}
    \braket{z_{(j,j)}} &= \sum_{k=1}^M B^\dagger_{(j,j) k} \braket{q_k} = \sqrt{ A_{jj} + \sum_{k \neq j} A_{jk} } \, \braket{q_j}
    \label{eq:zjj}
\end{align}
directly encode the position of mode $j$, provided that $A_{jj} + \sum_{k\neq j} A_{jk} \neq 0$. Analogous statements hold for momentum coordinates using $C$ and $D$. Consequently, the absolute value of any $r$-local correlation $\big|\langle z_{a_1}\cdots z_{a_r}\rangle\big|$ can be inferred from the probability of measuring the basis state $\ket{\alpha_1\ldots\alpha_r}
= \ket{(a_1,a_1)\ldots(a_r,a_r)}$, assuming nonvanishing coupling to the reference.

To recover sign information, one can use a superposition of the vacuum component $\ket{\mathds{1}}$ (corresponding to $r=0$) with the basis state $\ket{(a_1,a_1)\ldots(a_r,a_r)}$ obtained through a Hadamard transformation. Measuring in the rotated basis yields probabilities proportional to $\abs{1 \pm \braket{z_{(a_1, a_1)} \cdots z_{(a_r, a_r)}}}^2$, which 
allows one to infer the sign of $\braket{z_{a_1} \cdots z_{a_r}}$ whenever it is polynomially separated from zero.

In practice, one can gain efficiency when estimating all relevant amplitudes by recycling measurement outcomes. As measurements may also yield outcomes corresponding to coordinates $z_{(a,b)}$, which encode relative positions and momenta, these additional samples can be incorporated to reduce estimator variance. We restrict the discussion to the cases $r=1$ and $r=2$ and to positional coordinates $q$, governed by the matrix $A$. The analogous construction for momenta follows by replacing $q \to p$, $A \to C$, and $B \to D$.

\paragraph*{Linear expectation values.}
For $r=1$, we have
\begin{align}
    \braket{z_{(j,j')}} &= \sqrt{- A_{jj'} } \braket{q_j - q_{j'}} \\
    \braket{z_{(j,j)}} &= \sqrt{ A_{jj} + \sum_{k \neq j} A_{jk} } \, \braket{q_j}.
\end{align}
The prefactor $\sqrt{ A_{jj} + \sum_{k \neq j} A_{jk} }$ can be computed using a single row query of $A$. Measuring the probability of the computational basis state $\ket{(j,j)}$ therefore allows to compute
\begin{align}
    \abs{ \braket{q_j} } = \frac{1}{\sqrt{ A_{jj} + \sum_{k \neq j} A_{jk} }} \abs{ \braket{z_{(j,j)}} }\ .
\end{align}
The sign of $\braket{q_j}$ is not accessible from computational-basis measurements alone but can be recovered by interference, for example via a Hadamard transformation mixing $q_j$ with the vacuum component corresponding to $r=0$. A relative sign between two modes can be retrieved mixing $(q_j, q_{j'}) \to (q_j + q_{j'}, q_j - q_{j'})$.

\paragraph*{Quadratic correlations.}
Certain second moments are directly accessible. For example,
\begin{align}
    \abs{\braket{z_{(j,j')}^2}} = \abs{A_{jj'}} \braket{(q_j - q_{j'})^2},
\end{align}
which corresponds to the interaction energy associated with the $(j,j')$ coupling. More generally, cross-correlations $\braket{q_j q_{j'}}$ can be reconstructed provided that both modes have nonvanishing coupling to the reference and to each other. In that case one obtains
\begin{align}
    \braket{q_j q_{j'}} &= \frac{1}{2} \left( \frac{1}{A_{jj'}} \braket{z_{(j,j')}^2} + \frac{1}{A_{jj} + \sum_{k \neq j} A_{jk}} \braket{z_{(j,j)}^2} + \frac{1}{A_{j'j'} + \sum_{k \neq j'} A_{j'k}} \braket{z_{(j',j')}^2} \right), \label{eq:qq}
\end{align}
or equivalently,
\begin{align}
    \braket{q_j q_{j'}} &= \frac{1}{2\sqrt{A_{jj} + \sum_{k \neq j} A_{jk}} \sqrt{A_{j'j'} + \sum_{k \neq j'} A_{j'k}}} \braket{z_{(j,j)} z_{(j',j')}}.
    \label{eq:qq_2}
\end{align}
Since $\braket{z_\alpha^2} \ge 0$, we omit absolute values. These expressions yield the energy expectation values $A_{jj'} \braket{q_j q_{j'}}$, provided $A_{jj} + \sum_{k \neq j} A_{jk} \neq 0$ and similarly for $j'$. Extensions to higher-order products and to momentum coordinates follow analogously.

\section{\PostBQP-hardness of generic Gaussian Hamiltonians (Theorem \ref{thm:PostBQP})}
\label{app:PostBQP}

In this appendix, we prove Theorem~\ref{thm:PostBQP} by showing that simulating
the first-moment dynamics of generic Gaussian Hamiltonians with inertial
couplings and mixed terms from $E^+$ is \PostBQP-hard. The proof proceeds by an explicit
reduction from a \PostBQP-complete promise problem to the evolution of a
quadratic bosonic Hamiltonian. We start by stating the \PostBQP-complete bosonic
problem in subsection 1, and then proceed to analyze it in subsections 2--5.
Subsection 6 contains a technical result required in the analysis, which is
standard but provided for completeness. Finally, subsection 7 discusses
alternative \PostBQP-hard families of bosonic Hamiltonians.

\subsection{PostBQP-completeness Construction}

\PostBQP \ (Postselected Bounded-Error Quantum Polynomial Time) is the class of
decision problems solvable by a polynomial-time quantum algorithm that is
allowed to postselect on the outcome of a measurement
\cite{aaronson2004quantumcomputingpostselectionprobabilistic}. The natural
\PostBQP-complete problem is \textsc{postselected circuit}: Consider a quantum
register consisting of $n$ qubits -- a postselection qubit $P$, an output qubit
$Q$ and $n-2$ qubits in a register $R$ -- and a uniform quantum circuit $U =
\prod_{l=1}^L U_l$, $L=\poly(n)$, on those $n$ qubits, where the $U_l$ are
chosen from the computationally universal gateset
$\{\mathrm{Hadamard},\mathrm{Toffoli}\}$. Let
$\ket{\Psi_1}:=\tfrac{1}{\delta}\langle 1_P\rvert U\ket{0_P 0_Q 0_R}$ be the
normalized state obtained from evolving $\ket{0_P 0_Q 0_R}$ with $U$ and
postselecting on the $P$ register being $\ket{1_P}$. The task is then to decide
whether $\norm{ \braket{ 1_Q | \Psi_1 }}^2$ is greater than $2/3$ or smaller
than $1/3$, promised that one of the two is the case. Importantly, since
superpositions of computational basis states are only created by the Hadamard
gates in the circuit, we have that $\delta = \Omega(2^{-L})$.

In the following, we provide a reduction from \textsc{postselected circuit} to
the simulation of quadratic bosonic Hamiltonians, as defined in Theorem~\ref{thm:PostBQP} in the main text.
The reduction involves three parts: Dynamics, the initial state, and the
read-out. We start by stating the individual parts of the reduction, and
analyze it in the subsequent subsections.

\vspace{0.4em}
\noindent\textit{(i) Dynamics:} We encode the quantum circuit $U$ into a Feynman--Kitaev Hamiltonian
\begin{align}
    H_{\rm FK} = \sum_{l=1}^{L+1} \ket{l} \bra{l+1} \otimes U_l^\dagger + \ket{l+1} \bra{l} \otimes U_l
    \label{eq:FK}
\end{align}
where for technical reasons, we add one extra time step with $U_{L+1}=\mathds{1}$ at the end.
$H_{\mathrm{FK}}$ acts on a tensor product of a ``clock register'' $\ket{l}$, $l=1,\dots,L+2$, and
the $n+1$ qubit register of the circuit. The idea is that time evolution under
$H_{\mathrm{FK}}$ propagates an initial state $\ket{l=0}\otimes\ket{\phi}$ by
simultaneously propagating the time register and  applying the corresponding
gate $U_l$ to the qubit register, such that the state of the system at any time
is a superposition of states of the form $\ket{l}\otimes(U_l\cdots
U_1\ket{\phi})$. We encode the Hilbert space of Eq.~\eqref{eq:FK} in terms of
$M=(L+2)2^{n}$ bosonic modes, where we label position and momentum operators by
$q^{(l)}_{j_P,j_Q,j_R}$ and $p^{(l)}_{j_P,j_Q,j_R}$, respectively, with a layer index
$l\in\{1, ..., L+2\}$, and qubit indices $j_P,j_Q\in\{0,1\}$ and $j_R \in
\{0,1\}^{\times (n-2)}$. To define the encoding, consider the dynamics of first
moments under the Hamiltonian 
\begin{align}
    H &= \frac{1}{2} \begin{pmatrix} q & p \end{pmatrix} \begin{pmatrix} A & E^+ \\ E^+ & C \end{pmatrix} \begin{pmatrix} q \\ p \end{pmatrix}\ ,
    \label{eq:FK_Hamiltonian}
\end{align}
with
\begin{align}
    A &= 4 \mathds{1} - \sum_l^{L+1} \left( \ket{l} \bra{l+1} \otimes U_l^\dagger + \ket{l+1}\bra{l} \otimes U_l \right)\ , \label{eq:A_FK} \\
    C &= \mathds{1}\ , \\
    E^+ &= -\beta \ket{L+2} \bra{L+2} \otimes \ket{1_P}\bra{1_P}. 
    \label{eq:FK_Hamiltonian-series-end}
\end{align}
Since our gate set consists of real-valued gates, $A$ is a real-symmetric matrix, as required.
In order to match the sign convention of
Eqs.~(\ref{eq:sign_convention_I},\ref{eq:sign_convention_II}), one can
additionally encode negative entries in the unitaries $U_l$ through positive
interactions using the sign-separation trick at the cost of doubling the number of
modes (corresponding to adding one ancilla qubit which is put in the
$\ket{-}$ state), see~\cite{Babbush_2023,zschetzsche2025directequivalencedynamicsquantum}. $E^+$ is a
diagonal rank-one term of strength $\beta$ acting only on the final clock
layer. Intuitively, this term acts as an energetic bias favoring computational
histories that terminate with the postselected qubit $P$ in the postselected state
$\ket{1_P}$, while leaving all intermediate layers unchanged.

\vspace{0.4em}
\noindent\textit{(ii) Initial state:} 
The initial conditions of the bosonic
system to which we reduce \textsc{Postselected Circuit} are given by $\big\langle
q_j^{(l)}(0)\big\rangle = 0$ and $\big\langle p_j^{(l)}(0)\big\rangle = 
\big\langle \dot q_j^{(l)}(0)\big\rangle =
\delta_{l,1} \delta_{j,0}$. 

\vspace{0.4em}
\noindent\textit{(iii) Read-out:} 
Recall from Eq.~\eqref{eq:read_out} that read-outs involve estimating quantities of the form 
\begin{align}
    \zeta(I;t):=\sum_{r=1}^R \sum_{(\alpha_1, \ldots, \alpha_r) \in I} \abs{\braket{\prod_{i=1}^r z_{\alpha_i}(t)}}^2
\end{align}
over an efficiently characterizable index set $I$,  up to
$1/\mathrm{polylog}(M)=1/\mathrm{poly}(n)$ precision.  For our reduction, we
need to perform two such read-outs after a suitably chosen time $t_f=O(L)$. They are given by the index sets
\begin{equation}
    \label{eq:postbqp-readout-indexsets}
I_{\mathrm{num}} = \{ (l, \alpha) = (\overline{L+1}, \bar 1_P \bar 1_Q \bar j_R) \}
\mbox{\quad and\quad}
I_{\mathrm{den}} = \{ (l, \alpha) = (\overline{L+1}, \bar 1_P \bar b_Q \bar j_R) \}\ ,
\end{equation}
respectively,
each for all possible choices of the labels $\bar b_Q$ and
$\bar j_R$ (recall that the bar denotes momentum labels; that is, the measurement reads out a sum of expectation values of momenta).

Following these two measurements, we accept exactly if
$\zeta(I_\mathrm{num};t_f)/\zeta(I_\mathrm{den};t_f)>1/2$. As we will show in the following, 
for a suitably chosen time $t=O(L)$, this algorithm accepts exactly if the original \textsc{postselected circuit} instance accepts.

\subsection{Formal Solution to the Bosonic Problem}

The equations of motion 
defined in Eqs.~(\ref{eq:FK_Hamiltonian}-\ref{eq:FK_Hamiltonian-series-end})
can be rewritten as the following second order linear differential equation in $q$:
\begin{align}
    \big\langle\ddot q_j^{(l)}\big\rangle &= \left[ - (A - (E^+)^2) \braket{q} \right]_j^{(l)} =: \left[- \tilde A \braket{q} \right]_j^{(l)},
    \label{eq:diff_HO}
\end{align}
with $\tilde A := A - (E^+)^2$, which ceases to be positive semidefinite for large enough $\beta$. The momentum coordinates $p_j^{(l)}$ can be transformed into velocities $\dot q_j^{(l)}$ via
\begin{align}
    \big\langle\dot q_j^{(l)}\big\rangle &= \big\langle{p_j^{(l)}}\big\rangle - \delta_{l,L+2} \delta_{j_P, 1} \beta \big\langle{q_j^{(l)}}\big\rangle\ .
    \label{eq:postbqp-qdot-vs-p}
\end{align}
Correspondingly, the initial state 
$\big\langle p_j^{(l)}(0)\big\rangle = \delta_{l,1} \delta_{j,0}$  translates to
$\big\langle \dot q_j^{(l)}(0) \big\rangle =  \delta_{l,1} \delta_{j,0}$, and as
before $\big\langle q_j^{(l)}(0)\big\rangle \!=\!
0$.\footnote{Equation~\eqref{eq:postbqp-qdot-vs-p} is also the reason why we
added an extra time step $U_{L+1}=\mathds{1}$, since this way, at time $l=L+1$,
$\big\langle\dot q_j^{(L+1)}\big\rangle = \big\langle{p_j^{(L+1)}}\big\rangle$.}

The general solution to Eq.~\eqref{eq:diff_HO} reads 
\begin{align}
    \braket{\dot q(t)} + i \sqrt{\tilde A} \braket{q(t)} = e^{it \sqrt{\tilde A}} \left( \braket{\dot q(0)} + i \sqrt{\tilde A} \braket{q(0)} \right),
\end{align}
with initial conditions $\braket{\dot q(0)}$ and $\sqrt{\tilde A} \braket{q(0)}$. Using the initial condition
$\big\langle q_j^{(l)}(0)\big\rangle = 0$, this simplifies to
\begin{align}
        \braket{q(t)} = \sqrt{\tilde A}^{-1} \sin\left( \sqrt{\tilde A} t \right) \braket{\dot q(0)}, \qquad \text{and} \qquad \braket{\dot q(t)} = \cos\left( \sqrt{\tilde A} t \right) \braket{\dot q(0)}\ .
        \label{eq:timeevol-atilde}
\end{align}
Finally, we can re-interpret the degrees of freedom in these equations to live in a Hilbert space of dimension $(L+2)2^{n}$ with a clock register and an $n$-qubit working register, in which case the initial condition
$\big\langle \dot q_j^{(l)}(0) \big\rangle =  \delta_{l,1} \delta_{j,0}$
can be thought of as a quantum state $\ket{l=1}\otimes \ket{0_P0_Q0_R}$, which evolves (non-unitarily) under the equations \eqref{eq:timeevol-atilde}.

\subsection{Separation of Quantum Circuit Logic and Dynamics on the Clock Register}
\label{sec:tight_binding}

To analyze the matrix $\tilde A$ which generates the dynamics, we
separate the quantum circuit logic,
encoded in the $U_l$, from the hopping Hamiltonian acting on the clock register
$\{\ket{l}\}_{l=1}^{L+2}$. This can be achieved through  the basis
transformation
\begin{align}
\label{eq:Celf}
    S &= \sum_{l=1}^{L+2} \ket{l}\bra{l}  \otimes U_{L+1} \cdots U_{l}\ .
\end{align}
The transformed initial state reads $S (\ket{l=1} \otimes \ket{0_P 0_Q0_R}) =
\ket{l=1} \otimes (U \ket{0_P0_Q 0_R})$, with  $U= \prod_{l=1}^{L+1} U_l$. It thus directly
encodes the output state $U \ket{0_P0_Q0_R}$ of the computation. Note that
\begin{align}
    U \ket{0_P0_Q0_R} =: \sqrt{1 - \delta^2} \ket{0_P \Psi_0} + \delta \ket{1_P \Psi_1}\ ,
    \label{eq:delta}
\end{align}
where  $\ket{\Psi_1}:=\tfrac{1}{\delta}\langle 1_P\rvert U\ket{0_P 0_Q 0_R}$ 
is the normalized postselected state which encodes the final result
$\|\langle0_Q\vert\Psi_1\rangle\|^2$ of the \textsc{postselected circuit} problem, with postselection probability $\delta^2=\Omega(2^{-n})$.

In the new basis, the matrix $\tilde A$ generating the dynamics becomes $X=S\tilde A S^\dagger$, where
$X$ is of the form
\vspace{.5ex}
\begin{equation}
\label{eq:postsel-initialstate-Stransformed}
    X = X_0  \otimes \ket{0_P}\bra{0_P} 
    \otimes \mathds{1}_{n-1}
    + X_1  \otimes \ket{1_P}\bra{1_P}
    \otimes \mathds{1}_{n-1}\ ,
    \mbox{\ with\ } X_i = \begin{pmatrix} 
        4 & -1 & \dots & 0 & 0 & 0 \\
        -1 & 4 & \dots & 0 & 0 & 0 \\
        \vdots &\vdots &\ddots&\vdots&\vdots & \vdots \\
        0 & 0 & \dots & 4 & -1 & 0 \\
        0 & 0 & \dots & -1 & 4 & -1  \\
        0 & 0 & \dots & 0 & -1 & 4 - \beta^2 \delta_{i,1}
    \end{pmatrix}.
\end{equation}
\vspace{.5ex}
Here, $\mathds{1}_{n-1}$ denotes the identity operator on the remaining $n-1$
qubits (registers $Q$ and $R$). Importantly, we find that the entire dependence
on the circuit $U$ has been absorbed into the basis change $S$, and the
correspondingly updated initial state
\eqref{eq:postsel-initialstate-Stransformed}.

In the new basis, $X$ describes a dynamics which acts purely on the clock
register, and which is controlled solely by the value of the postselected qubit
$\{\ket{0_P},\ket{1_P}\}$. The analysis of the dynamics thus  reduces entirely
to studying the properties of the clock operators $X_0$ and $X_1$. For the sake
of the proof, we can absorb the transformation $S$ in a redefinition of
coordinates $\dot q \to \dot q' = S \dot q$, and we will henceforth work in this
transformed basis. To understand the dynamics of the system, $\dot q'(t) =
\cos(\sqrt{X}t) \dot q'(0)$, in the new basis, we now only need to analyze the
properties of $X$.

Importantly, note that due to the way in which $S$ is constructed, the state of the
computational registers $PQR$ at time $l=L+1$, which is what we choose for the read-out, is the same in the original and the transformed basis, such that we
can analyze the read-out in the new basis.

\subsection{Dynamics of the clock register: Tight binding model}
The matrices $X_0$ and $X_1$ are finite Jacobi matrices. Intuitively, $X_0$
describes a single particle hopping on a one-dimensional chain; this can also be
considered as the lattice discretization of a free particle on a finite line.
$X_1$ differs from $X_0$ in that it additionally has a single-site potential
well at the right end of the chain. If this well is deep enough (i.e., $\beta$
is large enough), it will thus possess an exponentially localized bound state
at the right end of the chain. (Due to the lattice discretization, where the
well is localized at a single site, there can only be one such bound state.)

More precisely, the properties of $X_0$ and $X_1$ and the implications on the
resulting dynamics are as follows (see Section~\ref{sec:walker}
for the detailed derivation): The spectrum of $X_0$ lies in the interval
$[2;6]$; the resulting time evolution
operator $\cos(\sqrt{X_0} t)$, cf.\ Eq.~\eqref{eq:timeevol-atilde}, is therefore
unitary. $X_1$ differs from $X_0$ precisely in the rank-one boundary
perturbation $-\beta^2\delta_{i,1}$ in the bottom right corner. It is well known
that rank-one boundary perturbations to lattice Jacobians can produce at most
one eigenvalue outside the spectral band of the unperturbed operator, and thus
at most one bound state (i.e., a negative eigenvalue), see for example
\cite{Simon1995RankOne} or \cite{Teschl2000Jacobi}.

The scenario with such a negative eigenvalue is precisely the case we are
interested in: If $X_1$ has such a  negative eigenvalue, this will give rise to
an exponentially growing eigenstate projector in the time evolution operator
$\cos(\sqrt{X_1} t) = \cosh(\sqrt{-X_1} t)$, which will therefore dominate the
full evolution $\cos(\sqrt{X}t)=\cos(\sqrt{X_0}t)\oplus \cos(\sqrt{X_1}t)$ at sufficiently long times, and thus amplify
the postselected sector. As we show in Subsection~\ref{sec:walker}, the
corresponding eigenvector is indeed exponentially localized at the right
boundary,  yet still has an at least exponentially small weight also at $l=1$.
As we analyze in the next subsection, this ensures that the initial state
\eqref{eq:Celf}, for which $\delta^2=\Omega(2^{-n})$, has sufficient overlap
with the corresponding eigenvector to make the postselected sector dominant
after propagation by a time linear in $L$, and to have sufficient weight on the
time $l=L+1$. 

While the Jacobi matrices $X_0$ and $X_1$ describe standard
tight-binding models, which have well-known solutions, we show these required
properties explicitly in Section \ref{sec:walker} for the sake of completeness. 

\subsection{Time Evolution and Read-Out}

After evolution for a time $t$, the state $\big\langle \dot q(t)\big\rangle = \cos(\sqrt{X}\,t)
\big\langle\dot q(0)\big\rangle$, expressed in the quantum register notation, will be 
\begin{align}
\ket{\Phi(t)} &= \cos(\sqrt{X}\,t)\,\Big[\ket{l=1}\otimes U\ket{0_P0_Q0_R}\Big]
\\
& \stackrel{\eqref{eq:delta}}{=}
\cos(\sqrt{X}\,t)\,\Big[\ket{l=1}\otimes 
    (\sqrt{1-\delta^2}\ket{0_P\Psi_0}+\delta\ket{1_P\Psi_1})\Big]
\\
& =
\sqrt{1-\delta^2}\,\Big[\cos(\sqrt{X_0}\,t)\,\ket{l=1}\Big]\otimes  \ket{0_P\Psi_0} 
    +
\delta\, \Big[\cos(\sqrt{X_1}\,t)\,\ket{l=1}\Big]\otimes  \ket{1_P\Psi_1} \ .
\end{align}
We now use the following: First, $X_0$ is positive semi-definite, and therefore
$\cos(\sqrt{X_0}\,t)$ is unitary and thus norm-preserving. Second, we choose
$\beta$ such that $X_1$ has precisely one negative eigenvalue $\alpha_1<0$, with
corresponding eigenvector $\ket{\chi_1}$, see 
Lemma~\ref{lem:Xspectrum} in Section~\ref{sec:walker}; 
thus, $\cos(\sqrt{X_1}\,t) = \cosh(\sqrt{-\alpha_1}\,t)
+ V(t)$, where $V(t)$ is once again unitary, i.e.\ norm-preserving. Together, this yields the estimate 
\begin{align}
\ket{\Phi(t)} &= 
\delta\,\cosh(\sqrt{-\alpha_1}\,t)\,\ket{\chi_1}\langle\chi_1\vert l=1\rangle\otimes  \ket{1_P\Psi_1}  + O(1)\ .
\end{align}

Let us now define
\begin{align}
\ket{\Phi'(t)} :=&\:
\big(\bra{l=L+1}\otimes\bra{1_P}\big)\,\ket{\Phi(t)} 
\\
=  &\:
\underbrace{\delta\,\cosh(\sqrt{-\alpha_1}\,t)\,\langle l=L+1\ket{\chi_1}\langle\chi_1\vert l=1\rangle}_{=:W(t)}\otimes  \ket{\Psi_1}  + O(1)\ ,
\label{eq:postsel-defPhiPrime-2}
\end{align}
the state obtained after projecting $\ket{\Phi(t)}$ onto time step $l=L+1$ and
the state $\ket{1_P}$ to be postselected on. 
In this notation,  the read-out \eqref{eq:postbqp-readout-indexsets} then
amounts to measuring $\zeta(I_\mathrm{den};t) = \big\|\ket{\Phi'(t)}\big\|^2$
and $\zeta(I_\mathrm{num};t) = \big\|\langle 1_Q\ket{\Phi'(t)}\big\|^2$,
respectively. We required that our measurements gave estimates with $O(1)$
accuracy; this finite precision can be readily absorbed in the $O(1)$ term in
\eqref{eq:postsel-defPhiPrime-2} and thus does not need to be accounted for separately.

Before putting things together, we still need to bound the prefactor $W(t)$ introduced in Eq.~\eqref{eq:postsel-defPhiPrime-2} from below. 
In Lemma~\ref{lem:Xspectrum} in Section~\ref{sec:walker}, we show that for
$\beta>2$, we have $-\alpha_1=2\cosh(\kappa_1)-4\ge \beta^2-4>0$, 
$|\langle l=1|\chi_1\rangle|^2 = \Omega\big(\tfrac{1}{L+2}e^{-2L\kappa_1}\big)$, and 
$|\langle l=L+1|\chi_1\rangle|^2 = \Omega(1)$. Combining this with $\delta^2=\Omega(2^{-L})$ and $\cosh(x)^2=\Omega(e^{2x})$, we obtain
\begin{align}
    |W(t)|^2 = \Omega\Big(\tfrac{1}{L+2}\,e^{2\sqrt{-\alpha_1}\,t-(2\kappa_1+\log 2)\,L}\Big)
\end{align}
and thus, for any $\beta>2$, there are constants $c>0$, $c'>0$ such that for
$t_f\ge cL=O(L)$, it holds that $W(t_f) = \Omega(e^{c'L})$.

We are now ready to analyze the read-out. The algorithm accepts if 
\begin{equation}
\frac12 < \frac{\zeta(I_\mathrm{num};t_f)}{\zeta(I_\mathrm{den};t_f)}
= 
\frac{\big\|W(t_f)\langle 1_Q\vert \Psi_1\rangle + O(1)\big\|^2}{
\big\|W(t_f)\,\lvert\Psi_1\rangle+O(1)\big\|^2}
=
\|\langle 1_Q\vert \Psi_1\rangle\|^2 + O\big(\tfrac{1}{W(t_f)}\big)
= \|\langle 1_Q\vert \Psi_1\rangle\|^2 + O(e^{-cL})
\end{equation}
(where we have used $\|\,\lvert\Psi_1\rangle\|=1$), that is, it accepts exactly if 
$\|\langle 1_Q\vert \Psi_1\rangle\|^2>2/3$ (vs.\ $<1/3$), i.e. exactly for accepting instances of the \textsc{postselected circuit} problem. This completes the proof.

\subsection{Solution of the Tight-Binding Model on the Clock Register}
\label{sec:walker}
The purpose of this Section is to establish the technical bound underlying the amplitude separation induced by the final-layer perturbation. We solve the spectrum of the matrices $X_0$ and $X_1$, defined in Section \ref{sec:tight_binding}. The spectral properties of the matrices $X_0$ and $X_1$ follow from standard results for finite Jacobi matrices with boundary perturbations. We include the derivation here for completeness and to make all constants explicit.

The two blocks $X_0$ and $X_1$, which act on the subspace of the post-selected qubit in the state $\ket{0_P}$ or $\ket{1_P}$ respectively, can be solved separately. The eigenvectors and -values of $X_0$ are the same as in \cite{Babbush_2023}, i.e. 
\begin{align}
    &X_0 \ket{\phi_l} = \gamma _l \ket{\phi_l}, \nonumber \\
    &\text{with} \quad \ket{\phi_l} = \sqrt{\frac{2}{L+3}} \sum_{l'=1}^{L+2} \sin\left( \frac{ll' \pi}{L+3} \right) \ket{l'} \quad \text{and} \quad \gamma_l = 4 - 2 \cos\left( \frac{\pi l}{L+3} \right).
\end{align}
The operator $X_1$ acts on the subspace of $\ket{1_P}$ similar to $X_0$, but with an additional term $-\beta^2 \ket{L+2}\bra{L+2}$. The spectrum of $X_1$ can be solved analogously to the $\beta = 0$ case using standard methods. The eigenvectors $\ket{\psi_{l}} = \sum_{l'=1}^{L+2} \psi_{ll'} \ket{l'}$ satisfy the difference equations
\begin{align}
    4 \psi_{ll'} - \psi_{l(l'+1)} - \psi_{l(l'-1)} &= \alpha_l \psi_{ll'} \quad \forall l' \in \{1, ..., L+1 \} 
    \label{eq:difference1} \\
    (4 - \beta^2) \psi_{l(L+2)} - \psi_{l(L+1)} &= \alpha_l \psi_{l(L+2)}
    \label{eq:difference2}
\end{align}
using $\psi_{l0} = 0$ for compact notation. The difference equations are solved by the following ansatz $\psi_{ll'} = c_l \sin(l' \omega_l)$ and $\alpha_l = 4 - 2 \cos(\omega_l)$. This can be quickly checked by inserting into Eq.~\eqref{eq:difference1} (omitting the normalization constant $C$)
\begin{align}
    &4 \sin(l' \omega_l) - \sin((l'+1) \omega_l) - \sin((l'-1) \omega_l) \nonumber \\
    &= 4 \sin(l' \omega_l) - \sin(l' \omega_l) \cos(\omega_l) - \cos(l' \omega_l) \sin(\omega_l) - \sin(l' \omega_l) \cos(\omega_l) + \cos(l' \omega_l) \sin(\omega_l) \nonumber \\
    &= (4 - 2 \cos(\omega_l)) \sin(l' \omega_l),
\end{align}
using the angle addition theorem, $\sin(x \pm y) = \sin x \cos y \pm \cos x \sin y$ for all $x,y \in \mathds{C}$. Inserting the ansatz into Eq.~\eqref{eq:difference2} yields a quantization condition on the frequencies $\omega_l$
\begin{align}
    (2 \cos(\omega_l) - \beta^2) \sin((L+2) \omega_l) &= \sin((L+2 - 1) \omega_l) = \sin((L+2)\omega_l) \cos(\omega_l) - \cos((L+2)\omega_l) \sin(\omega_l) \nonumber \\
    & \iff \beta^2 = \cos(\omega_l) + \frac{\sin(\omega_l)}{\tan((L+2) \omega_l)}.
    \label{eq:beta_condition}
\end{align}
In the $\beta = 0$ case, this yields $\tan(\omega_l) = \tan((L+2)\omega_l)$, which determines the frequencies used above. For nonzero $\beta$, there are no analytical solutions to this equation, in general. However, we will show that, for large enough $\beta$, there exists a unique (pair of) solution(s) of $\omega_1 = \pm i \kappa_1$ to Eq.~\eqref{eq:beta_condition} that is purely imaginary and ensures a negative eigenvalue $\alpha_1 = 4 - 2\cosh(\kappa_1) < 0$, and ultimately creates exponential growth in $\cos(\sqrt{X_1} t)$ as opposed to oscillatory dynamics from positive $\alpha_l, \gamma_l$.
\begin{lemma}
\label{lem:Xspectrum}    
Let $\kappa_1 = i\omega_1 \in \mathds{R}$ be a solution of Eq.~\eqref{eq:beta_condition} for $\beta > 2$ and $L \geq 0$.
    \begin{enumerate}
        \item There exists a unique negative eigenvalue $\alpha_1 = 4 - 2\cosh(\kappa_1)$, separated from zero by a constant, $\alpha_1 \leq 4 - \beta^2 < 0$.
        \item The corresponding eigenvector $\ket{\psi_1}$ has an overlap $\abs{\braket{l=1 | \psi_1}}^2 = \Omega\left( \frac{e^{-2L\kappa_1}}{L+2} \right)$.
        \item The corresponding eigenvector $\ket{\psi_1}$ has an overlap $\abs{\braket{l=L+1 | \psi_1}}^2 = \Omega(1)$ that is separated from zero by a constant.
    \end{enumerate}
    \begin{proof}
    \begin{enumerate}
        \item We start from Eq.~\eqref{eq:beta_condition} and bound
        \begin{align}
            \cosh(\kappa_1) = \beta^2 - \sinh(\kappa_1) \coth((L+2)\kappa_1) = \beta^2 - \tanh(\kappa_1) \cosh(\kappa_1) \coth((L+2)\kappa_1),
        \end{align}
        substituting $\sinh(\kappa_1) = \tanh(\kappa_1) \cosh(\kappa_1)$. Now, we solve again for $\cosh(\kappa_1)$ to get
        \begin{align}
            \cosh(\kappa_1) = \frac{\beta^2}{ 1 + \tanh(\kappa_1) \coth((L+2)\kappa_1)} \geq \frac{\beta^2}{2},
        \end{align}
        where we used $\tanh(x) \coth((L+2)x) \leq 1$ for $L\geq 0$. Uniqueness of $\alpha_1$ is a consequence of the reality of the eigenvalues of $X_1$; $\alpha = 4 - 2 \cos(\omega)$ is real if, and only if $\omega$ is purely real or purely imaginary, otherwise $\cos(x + iy) = \cos(x) \cosh(y) - i \sin(x) \sinh(y)$ is not real. For $\omega$ real, $\alpha$ cannot be negative, hence the only possible solution is $\omega = \pm i \kappa$, for which there is only one solution for $\kappa$ through Eq.~\eqref{eq:beta_condition} given a value for $\beta^2$, which is a monotonic function in $\kappa$. Hence exactly one exponentially growing mode exists in the $\ket{1_P}$ sector, while all other modes remain oscillatory.

        \item Consider the normalized eigenvector corresponding to the eigenvalue $\alpha_1$
        \begin{align}
            \ket{\psi_1} = c_1 \sum_{l' = 1}^{L+2} \sinh(l' \kappa_1) \ket{l'} \qquad \text{with} \qquad c_1 = \left(\sum_{l' = 1}^{L+2} \sinh(l' \kappa_1)^2\right)^{-\frac{1}{2}} = 2 \left( \frac{\sinh((2L+5) \kappa_1)}{\sinh(\kappa_1)} - (2L+5) \right)^{-\frac{1}{2}},
        \end{align}
        We bound the overlap with the $l=1$ using the inequality $c_1 \geq \frac{1}{\sqrt{L+2}} \cdot \frac{1}{\sinh((L+2)\kappa_1)}$. Thus, 
        \begin{align}
            \abs{\braket{l=1 | \psi_1}} &\geq \frac{1}{\sqrt{L+2}} \frac{\sinh(\kappa_1)}{\sinh((L+2) \kappa_1)} = \Theta\left( \frac{e^{-L\kappa_1}}{\sqrt{L+2}} \right).
        \end{align}
        \item The third statement can be similarly bounded. Using partial sums of the geometric series, one can calculate $c_1 = 2 \left( \frac{\sinh((2L+5) \kappa_1)}{\sinh(\kappa_1)} - (2L+5) \right)^{-\frac{1}{2}}$. With this, we can bound
        \begin{align}
            \abs{\braket{l=L+1 | \psi_1}}^2 = 2 \frac{\sinh((L+1)\kappa_1)^2}{ \frac{\sinh((2L+5) \kappa_1)}{\sinh(\kappa_1)} - (2L+5) } \geq 2 \sinh(\kappa_1) \frac{e^{2(L+1)\kappa_1} - 2}{ e^{(2L+5) \kappa_1} } \geq 2 \sinh(\kappa_1) e^{-3\kappa_1} (1 - 2 e^{-2 \kappa_1} ) ,
        \end{align}
        by dropping positive terms in the numerator and negative terms in the denominator. This lower bound has a constant separation from zero.
    \end{enumerate}
    \end{proof}
\end{lemma}
The crude bound $\tanh(x) \coth((L+2)x) \leq 1$ in the proof of statement one of Lemma \ref{lem:Xspectrum} leads us to the constraint $\beta > 2$, but can be tightened to the $L$-dependent bound
\begin{align}
    \beta^2 > 2 + \sqrt{3} \; \frac{(2 + \sqrt{3})^{2(L+2)} + 1}{(2 + \sqrt{3})^{2(L+2)} - 1},
    \label{eq:beta_bound}
\end{align}
using the critical value $\kappa_0 = \cosh^{-1}(2) = \log(2 + \sqrt{3})$. Eq.~\eqref{eq:beta_bound} includes the value $\beta = 2$, for instance, to ensure a constant separation of $\alpha_1$ from zero. Using $\kappa_1 \geq \kappa_0$, also yields a numerical lower bound of $5\%$ for the third statement of Lemma \ref{lem:Xspectrum}.

\subsection{Alternative PostBQP-hard families of bosonic Hamiltonians}
\label{app:alt_PostBQP}
At the end of this appendix, let us argue that the hardness result is robust under more general choices of $E^+$. One might wonder how restrictive the choice of nonzero $E^+$ is for \PostBQP-hardness. Indeed, \PostBQP-hard instances can be found in a more general class of Hamiltonians. Off-diagonal terms $q_j p_k + q_k p_j$ of $E^+$ can be diagonalized by changing the qubit encoding using a symplectomorphism $q \mapsto O q$ and $p \mapsto (O^{-1})^T p$. If we were to only switch on a single term, post-selection would not target a single-qubit output $\ket{1_P}$ but rather a whole bitstring, which is equivalent to post-selecting only one qubit at the end and can be mapped onto another using an ancillary register \cite{aaronson2004quantumcomputingpostselectionprobabilistic}. 

Another family of Hamiltonians whose simulation is \PostBQP-hard is given by the constraint $E^+ = 0$. As we show in appendix \ref{app:coordinates}, besides $E^+$ also $\Delta = A - C$ can generate non-unitary dynamics. Setting $E = i E^-$, the dynamics is governed by the differential equation
\begin{align}
    \ddot q = -i E^- \dot q + C \dot p = -\left( (E^-)^2 + CA \right) q - i (E^- C + C E^-) p.
\end{align}
If we can get rid of the term proportional to $p$, we can reproduce Eq.~\eqref{eq:diff_HO}, in particular the differential operator $\tilde A$. In particular, we have to choose $A, C$ real symmetric and $E^-$ real antisymmetric, such that 
\begin{enumerate}
    \item $E^- C + C E^- = 0$
    \item $CA$ acts as a Feynman-Kitaev Hamiltonian
    \item $(E^-)^2 = -\beta^2 \Pi$ with an adequate projector $\Pi$ onto the postselected part of the readout layer
\end{enumerate}
These conditions can be implemented, for instance, by defining two Feynman-Kitaev dynamics on a doubled system
\begin{align}
    A &= A_{FK} \otimes Z \\
    C &= \mathds{1} \otimes Z \\
    E^- &= \beta \ket{L+2} \bra{L+2} \otimes \ket{1_P} \bra{1_P} \otimes (iY)
\end{align}
with $A_{FK}$ being the matrix defined in Eq.~\eqref{eq:A_FK} and $Z$ and $Y$ are Pauli matrices. One can directly check that (1.) $E^-$ and $C$ anticommute, (2.) $CA$ describes two copies of the same Feynman-Kitaev Hamiltonian and (3.) $(E^-)^2 = - \beta^2 \Pi_{L+2, 1_P} \otimes \mathds{1}$ postselects onto the last readout layer in both copies.
\begin{align}
    (E^-)^2 + CA = \tilde A \otimes \mathds{1},
\end{align}
reproduces the generator used in the proof above. It has the same spectral properties -- as before -- and when starting in the analog initial state with only $\braket{p^{(l=1)}_{0}(0)} = 1$, one can solve the \textsc{postselected circuit} problem by estimation of the momenta in the output layer. Note that while $A$ and $C$ individually are indefinite, the product $CA$ only has positive eigenvalues, which shows that the non-unitary dynamics necessary for postselection comes solely from $(E^-)^2$, cf. also appendix \ref{app:PSD}. 

A third family of Hamiltonians which are \PostBQP-hard is given by $E=0$, but lifting the condition that $CA$ has to have a positive spectrum, as we require in systems of inertially coupled bosons. In this family of Hamiltonians one can directly choose $A$ to be the targeted generator $\tilde A$ as in Eq.~\eqref{eq:diff_HO} with a negative eigenvalue, which is now allowed.

\section{Coordinate Transformations}
\label{app:coordinates}
This appendix analyzes the conditions under which the first-moment dynamics of quadratic bosonic systems remain bounded, i.e., do not exhibit exponential
amplification. While the main text focuses on inertially coupled bosons, the
following discussion treats general quadratic Hamiltonians and later specializes
to the families shown in Fig.~\ref{fig:QO}. Finally, we demonstrate how
coordinate transformations can be used to interchange terms in the resulting
differential equations, thereby identifying the boundaries between simulable
families of Hamiltonians.

\subsection{Most General Quadratic Bosonic Hamiltonian}
The most general quadratic Hamiltonian can be written as
\begin{align}
    H = \frac{1}{2} \begin{pmatrix} q & p \end{pmatrix} \begin{pmatrix} A & E \\ E^T & C \end{pmatrix} \begin{pmatrix} q \\ p \end{pmatrix},
    \label{eq:quadHam}
\end{align}
where the matrix $E=E^+ + i E^-$ can be split into a real and symmetric part $E^+$ contributing with terms $\frac{1}{2} E_{jk} (q_j p_k + q_k p_j)$ and an imaginary, antisymmetric (i.e.\ Hermitian) part $E^-$ contributing with terms of the form $\frac{1}{2} E_{jk} (q_j p_k - q_k p_j)$. In matrix form, the equations of motion become
\begin{align}
    \partial_t \begin{pmatrix} i q \\ p \end{pmatrix} = -i \begin{pmatrix} i E^T & -C \\ -A & -i E \end{pmatrix} \begin{pmatrix} i q \\ p \end{pmatrix} = -i \left( \mathds{1} \otimes E^- + (iZ) \otimes E^+ + (iY) \otimes \Delta - X \otimes \Sigma \right) \begin{pmatrix} iq \\ p \end{pmatrix},
    \label{eq:qp_dynamics}
\end{align}
with $\Sigma= \frac{1}{2} (A + C)$ and $\Delta = \frac{1}{2} (A - C)$. For the first-moment dynamics to remain bounded—i.e., to avoid exponential amplification or suppression of the variables of interest—the matrix in Eq.~\eqref{eq:qp_dynamics} must have purely real eigenvalues, corresponding to oscillatory eigenmodes. A sufficient condition ensuring this is that the generator in Eq.~\eqref{eq:qp_dynamics} is anti-Hermitian.

\subsection{Positivity of coupling matrices $A$ and $C$}
\label{app:PSD}
As we show in the main text, coordinates can be chosen as $B^\dagger q$ and $D^\dagger p$ for $A= BB^\dagger$ and $C=DD^\dagger$, such that correlations evolve unitarily in the absence of mixed terms, $E=0$. If in addition to positivity of $A$ and $C$, the matrix elements are derived from positive coupling constants (cf. Eqs.~\eqref{eq:sign_convention_I} and \eqref{eq:sign_convention_II}), an explicit and efficiently queriable form of $B$ and $D$ is known, cf. Eq.~\eqref{eq:B}. If this sign convention is not met, an oracle access to quantum circuits implementing $\sqrt{A}$ and $\sqrt{C}$ can be employed to simulate time evolution via quantum phase estimation, cf. Theorem~4 of \cite{Babbush_2023}. 

In fact, the condition that both, $A$ and $C$, are PSD is stronger than necessary. In the case $E=0$, a positive spectrum of the product $AC$ is sufficient to ensure unitary evolution, cf. Eq.~\eqref{eq:eom_k}. This can be achieved using indefinite or even negative $A$ and $C$ if they share the same block structure in a decomposition into positive and negative part such that signs are correctly canceled in the product. An example that deviates from this condition is the single-mode squeezing Hamiltonian 
\begin{align}
    H = \frac{1}{2} (q^2 - p^2) = \frac{1}{2} (a^2 + (a^\dagger)^2),
\end{align}
corresponding to $A=1$ and $C=-1$. In this case the dynamics exponentially amplifies one quadrature, while suppressing the conjugate one. Effectively, this projects our encoded state $\ket{\psi}$ onto the squeezed quadrature eigenstate, $\frac{1}{\sqrt{2}} \left( \ket{\alpha = 1} + \ket{\alpha = \bar 1} \right)$.

\subsection{Symplectic Transformations and Generalized Hopping}
Upon inspection of Eq.~\eqref{eq:qp_dynamics}, $R$-point correlations of $\braket{iq}$ and $\braket{p}$ can also be directly simulated (without transforming into the coordinates $B^\dagger q$ and $D^\dagger p$) by switching off the non-Hermitian terms $E^+ = \Delta = 0$. This extends the family of quantum walk Hamiltonians discussed in the main text (cf. Eq.~\eqref{eq:quantum_walk}) by extending the graph to complex weights, where imaginary parts correspond to $E^- \neq 0$.

Using unitary transformations on the Paulis in Eq.~\eqref{eq:qp_dynamics} allows for a permutation of terms and permutations of the phase factor $i$, which determines Hermiticity. In particular, a symplectic transformation, that is a coordinate transformation which leaves the symplectic form $\Omega = (iY) \otimes \mathds{1}$ unchanged, also leaves the equations of motion invariant. One example is the transformation 
\begin{align}
	-i \exp\left(-i \frac{\pi}{4} Y \right) \otimes \mathds{1} = \frac{-i}{\sqrt{2}} \begin{pmatrix} \mathds{1} & -\mathds{1} \\ \mathds{1} & \mathds{1} \end{pmatrix},
\end{align}
which maps the $(iq,p)$ basis onto the Fock basis spanned by ladder operators $(a,a^\dagger)$. These coordinates were used in the main text (cf. discussion around Eq.~\eqref{eq:quantum_walk_pre}) to show the equivalence of this simulable family, $E^+ = \Delta = 0$ to number-preserving Hamiltonians.

\end{document}